\newtheorem{theorem}{Theorem}
\newtheorem{lemma}{Lemma}
\newtheorem{definition}{Definition}
\newcommand{\etal}{\textit{et~al.}}
\newcommand{\para}[1]{{#1}}
\newcommand{\set}[1]{{\bf{#1}}}
\newcommand{\shta}[1]{\mathit{SHT}({#1})}
\newcommand{\ph}[1]{\mathit{PH}({#1})}
\newcommand{\psufp}[1]{\mathit{psp}({#1})}
\newcommand{\pre}[1]{\mathit{prev}({#1})}
\newcommand{\pph}[1]{\mathit{PPH}({#1})}
\newcommand{\apph}[1]{\mathit{APPH}({#1})} 
\newcommand{\occ}{\mathit{occ}}
\newcommand{\rt}{\mathit{root}}
\newcommand{\node}[1]{{#1}}
\newcommand{\mrp}[1]{\mathit{mrp}(#1)}
\newcommand{\pmrp}[1]{\mathit{pmrp}(#1)}
\newcommand{\descendant}[2]{\mathit{Des}_{#1}({#2})}
\newcommand{\zero}{\set{Z}}
\newcommand{\calN}{{\cal N}}
\newcommand{\primary}[1]{\mathit{prim}(#1)}
\newcommand{\secondary}[1]{\mathit{sec}(#1)}
\newcommand{\newactn}{\mathit{newnode}}
\newcommand{\ans}{\mathit{ans}}
\newcommand{\child}[2]{\mathit{child}(#1,#2)}
\newcommand{\depth}[1]{\mathit{depth}(#1)}
\newcommand{\norm}[2]{\mathit{normalize}(#1,#2)}
\newcommand{\nll}{\textsf{null}}
\newcommand{\create}{\textsf{create}}
\newcommand{\ignore}[1]{}
\title{Position Heaps for Parameterized Strings}
\author[1]{Diptarama}
\author[1]{Takashi Katsura}
\author[1]{Yuhei Otomo}
\author[1]{Kazuyuki Narisawa}
\author[1]{Ayumi Shinohara}
\affil[1]{Graduate School of Information Sciences, Tohoku University, \\
  6-6-05 Aramaki Aza Aoba, Aoba-ku, Sendai, Japan\\
  \texttt{\{diptarama@shino., katsura@shino., otomo@shino., narisawa@, ayumi@\}ecei.tohoku.ac.jp}}
\begin{document}
\maketitle

\begin{abstract}
We propose a new indexing structure for parameterized strings, called parameterized position heap.
Parameterized position heap is applicable for parameterized pattern matching problem,
 where the pattern matches a substring of the text if there exists a
 bijective mapping from the symbols of the pattern to the symbols of the substring.
We propose an online construction algorithm of parameterized position heap of a text and show that our algorithm runs in linear time
with respect to the text size.
We also show that by using parameterized position heap, we can find all occurrences of a pattern in the text in linear time with respect to the product of the pattern size and the alphabet size.

\end{abstract}

\section{Introduction}
String matching problem is to find occurrences of a pattern string in a text string.
Formally, given a text string $t$ and a pattern string $p$ over an
alphabet $\Sigma$, output all positions at which $p$ occurs in $t$.
Suffix tree and suffix array are most widely used data structures
and provide many applications for various string matchings~(see e.g.~\cite{GUS,JEWELS}).

Ehrenfeucht~\etal{}~\cite{PH} proposed an indexing structure for string matching,
called a {\em position heap}.
Position heap uses less memory than suffix tree does,
and provides efficient search of patterns by preprocessing the text string,
similarly to suffix tree and suffix array.
A position heap for a string $t$ is a {\em sequence hash tree}~\cite{SHT}
for the ordered set of all suffixes of $t$.
In \cite{PH}, the suffixes are ordered in the ascending order of length,
and the proposed construction algorithm processes the text from right to left.
Later, Kucherov~\cite{OPH} considered the ordered set of suffixes
in the descending order of length
and proposed a linear-time {\em online} construction algorithm
based on the Ukkonen's algorithm~\cite{UKKONEN}.
Nakashima {\em et al.}~\cite{PosHeapTrie} proposed an algorithm
to construct a position heap for a set of strings,
where the input is given as a {\em trie} of the set.
Gagie~\etal{}~\cite{Gagie2013} proposed a position heap with limited height
and showed some relations between position heap and suffix array.

The parameterized pattern matching that focuses on a structure of strings
is introduced by Baker~\cite{PST}.
Let $\Sigma$ and $\Pi$ be two disjoint sets of symbols.
A string over $\Sigma \cup \Pi$ is called a {\em parameterized string}
(p-string for short).
In the parameterized pattern matching problem,
given p-strings $\para{t}$ and $\para{p}$,
find positions of substrings of $\para{t}$
that can be transformed into $\para{p}$ by applying one-to-one function
that renames symbols in $\Pi$.
The parameterized pattern matching is motivated
by applying to the software maintenance~\cite{PM,PST,PMA},
the plagiarism detection~\cite{EPM},
the analysis of gene structure~\cite{SST},
and so on.
Similar to the basic string matching problem,
some indexing structures that support the parameterized pattern matching
are proposed,
such as parameterized suffix tree~\cite{PST}, structural suffix tree~\cite{SST}, and parameterized suffix array~\cite{PSC2008-8,I2009}.

In this paper,
we propose a new indexing structure called \emph{parameterized position heap}
for the parameterized pattern matching.
The parameterized position heap is
a sequence hash tree for the ordered set
of prev-encoded~\cite{PST} suffixes of a parameterized string.
We give an online construction algorithm of a parameterized position heap based on Kucherov's algorithm~\cite{OPH}
that runs in $O(n\log{(|\Sigma|+|\Pi|)})$ time
and an algorithm that runs in $O(m\log{(|\Sigma|+|\Pi|)} + m|\Pi| + \occ)$ time
to find the occurrences of a pattern in the text,
where $n$ is the length of the text, $m$ is the length of the pattern, $|\Sigma|$ is the number of constant symbols, 
 $|\Sigma|$ is the number of parameter symbols, and $\occ$ is the number of occurrences of the pattern in the text.

\section{Notation}

Let $\Sigma$ and $\Pi$ be two disjoint sets of symbols.
$\Sigma$ is a set of \emph{constant} symbols
and $\Pi$ is a set of \emph{parameter} symbols.
An element of $\Sigma^{*}$ is called a \emph{string},
and an element of $(\Sigma \cup \Pi)^{*}$ is called a \emph{parameterized string},
or \emph{p-string} for short.
For a p-string $\para{w}=\para{xyz}$, $\para{x}$, $\para{y}$, and $\para{z}$ are called 
\emph{prefix}, \emph{substring}, and \emph{suffix} of $\para{w}$, respectively.
$|\para{w}|$ denotes the length of $\para{w}$, and
$\para{w}[i]$ denotes the $i$-th symbol of $\para{w}$ for $1 \leq i \leq |\para{w}|$.
The substring of $\para{w}$ that begins at position $i$ and ends at position $j$ 
is denoted by $\para{w}[i:j]$ for $1 \leq i \leq j \leq |\para{w}|$.
Moreover,
let $\para{w}[:i]=\para{w}[1:i]$ and $\para{w}[i:]=\para{w}[i:|w|]$ for $1 \leq i \leq |\para{w}|$.
The empty p-string is denoted by $\varepsilon$, that is $|\varepsilon|=0$.
For convenience, let $\para{w}[i:j] = \varepsilon$ if $i > j$.
Let $\calN$ denote the set of all non-negative integers.

Given two p-strings $w_1$ and $w_2$, 
$w_1$ and $w_2$ are a \emph{parameterized match} or \emph{p-match}, denoted by $w_1 \approx w_2$,
if there exists a bijection $f$ from the symbols of $w_1$ to the symbols of $w_2$,
such that $f$ is identity on the constant symbols~\cite{PST}.
We can determine whether $w_1 \approx w_2$ or not by using an encoding called \emph{prev-encoding} defined as follows.

\begin{definition}[Prev-encoding~\cite{PST}]
For a p-string $w$ over $\Sigma \cup \Pi$,
the \emph{prev-encoding} for $w$,
 denoted by  $\pre{w}$,
is a string $x$ of length $|w|$ over $\Sigma \cup \calN$ defined by
\[
  x[i] = \begin{cases}
    \para{w}[i] & \mbox{if }\para{w}[i] \in \Sigma , \\
    0           & \mbox{if }\para{w}[i] \in \Pi \mbox{ and } \para{w}[i] \neq \para{w}[j] \mbox{ for } 1 \le j < i,\\

    i-\max\{j \mid \para{w}[j]=\para{w}[i] \mbox{ and } 1 \le j < i\} & \mbox{otherwise} .
  \end{cases}
\]
\end{definition}

For any p-strings $w_1$ and $w_2$,  $w_1 \approx w_2$ if and only if $\pre{w_1}=\pre{w_2}$.
For example, given $\Sigma = \{{\tt a, b}\}$ and $\Pi = \{u,v,x,y\}$,
$s_1 = uvuv{\tt a}uuv{\tt b}$ and $s_2 = xyxy{\tt a}xxy{\tt b}$ are p-matches where
$\pre{w_1} = \pre{w_2} = 0022{\tt a}314{\tt b}$.

The parameterized pattern matching is a problem to find occurrences of a p-string pattern in a p-string text defined as follows.
\begin{definition}[Parameterized pattern matching~\cite{PST}]
Given two p-strings, text $t$ and pattern $p$,
find all positions $i$ in $t$ such that $t[i:i+|p|-1]\approx p$. 
\end{definition}
For example, let us consider a text $t=uv{\tt a}u{\tt b}u{\tt a}v{\tt b}v$ and a pattern $p = x{\tt a}y{\tt b}y$
over $\Sigma = \{{\tt a, b}\}$ and $\Pi = \{u,v,x,y\}$.
Because $p \approx t[2:6]$ and $p \approx t[6:10]$, we should output $2$ and $6$.

Throughout this paper, let $t$ be a text of length $n$ and $p$ be a pattern of length $m$.

\section{Position Heap}

\begin{figure}[t]
	\centering
	\begin{minipage}[t]{0.32\hsize}
		\centering
		\includegraphics[scale=0.54]{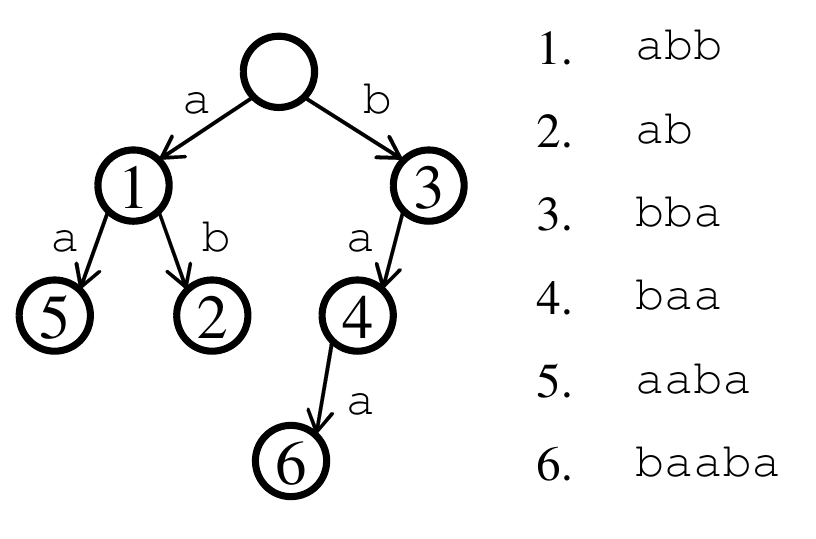}\\
		\ \ \ \scriptsize{(a)}
	\end{minipage}
	\begin{minipage}[t]{0.32\hsize}
		\centering
		\includegraphics[scale=0.4]{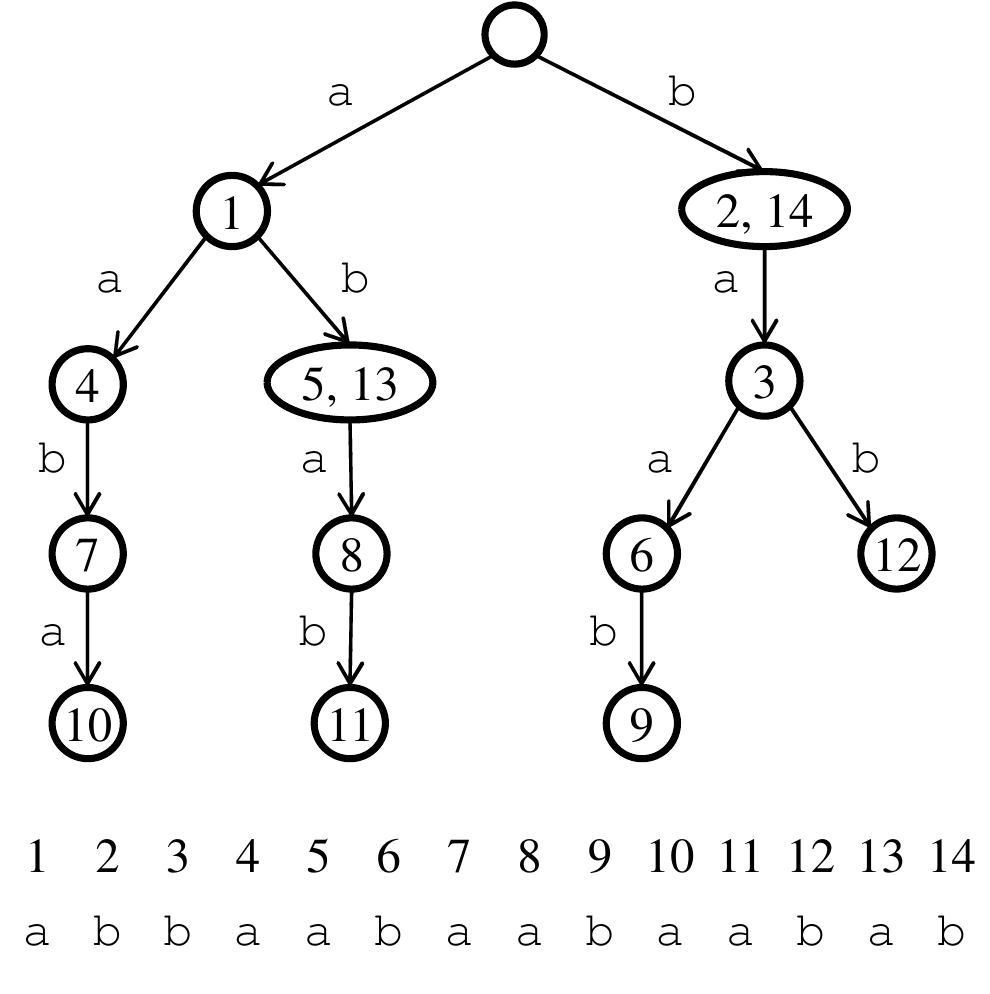}\\
		\ \ \ \scriptsize{(b)}
	\end{minipage}
	\begin{minipage}[t]{0.32\hsize}
		\centering
		\includegraphics[scale=0.4]{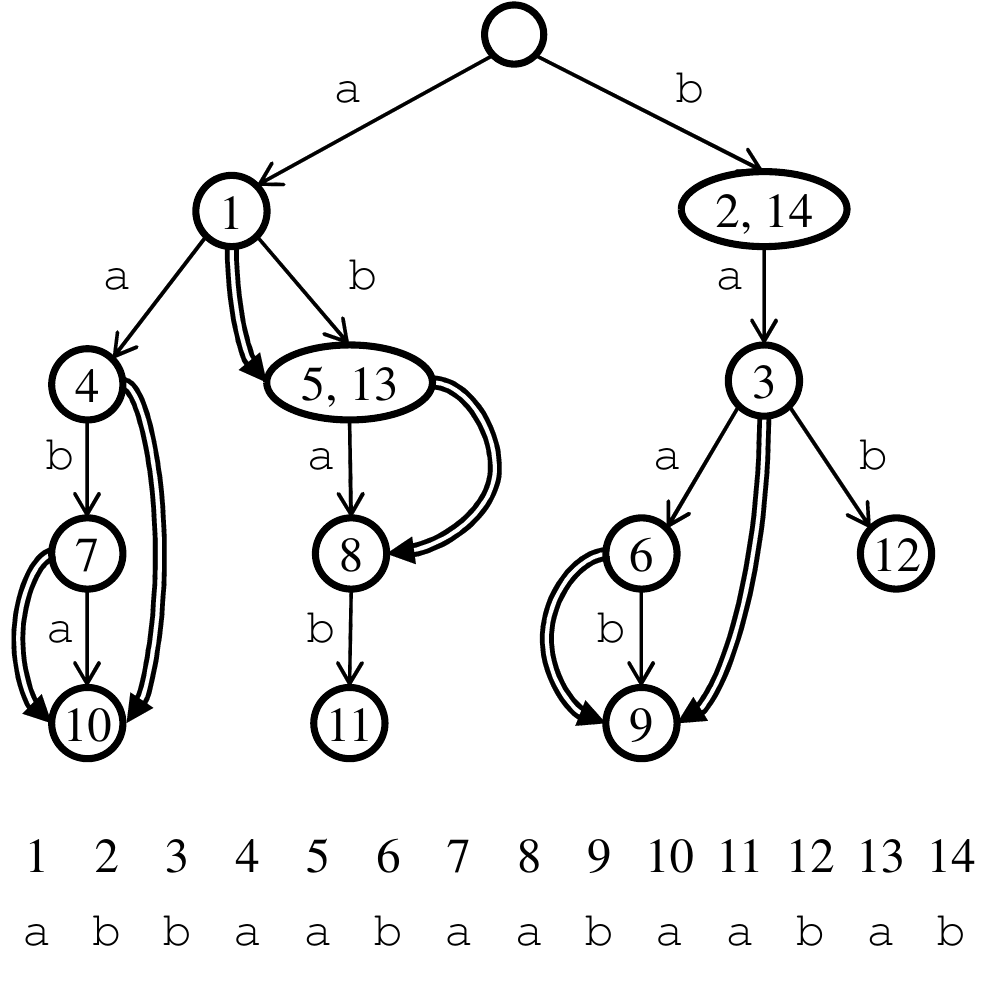}\\
		\ \ \ \scriptsize{(c)}
	\end{minipage}
	\caption{
		(a) A sequence hash tree for $({\tt aab,ab,bba,baa,aaba,baaba})$.
		(b) A position heap for a string ${\tt abbaabaabaabab}$,
		(c) An augmented position heap for a string ${\tt abbaabaabaabab}$.
		Maximal-reach pointers for $\mrp{i} \neq i$ are illustrated by doublet arrows.
	}
	\label{fig:sht_ph}
\end{figure}

In this section, we briefly review the position heap for strings.
First we introduce the \emph{sequence hash tree}
that is a trie for hashing proposed by Coffman and Eve~\cite{SHT}.
Each edge of the trie is labeled by a symbol
and each node can be identified with the string obtained by concatenating all labels found on the path from root to the node.

\begin{definition}[Sequence Hash Tree]
Let $\set{W}=(w_1,\ldots,w_n)$ be an ordered set of strings over $\Sigma$ and $\set{W}_i=(w_1,\ldots,w_i)$ for $1 \leq i \leq n$.
A \emph{sequence hash tree} $\shta{\set{W}}=(V_n,E_n)$
for $\set{W}$ is a trie over $\Sigma$ defined recursively as follows.
Let $\shta{\set{W}_i} = (V_i, E_i)$. Then,
	\begin{align*}
	\shta{\set{W}_i} &=  
		\begin{cases}
		(\{\varepsilon\}, \emptyset)	&	\left( \text{if $i = 0$} \right),\\
		(V_{i-1} \cup \{\node{p_i}\}, E_{i-1} \cup \{(\node{q_i},c,\node{p_i})\})
			& \left( \text{if $1 \leq i \leq n$} \right).
		\end{cases}
	\end{align*}
where $p_i$ is the shortest prefix of $w_i$ such that $\node{p_i} \not\in V_{i-1}$,
and $q_i=w_i[1:|p_i|\!-\!1]$, $c=w_i[|p_i|]$.
If no such $p_i$ exists,
then $V_i = V_{i-1}$ and $E_i = E_{i-1}$. 
\end{definition}

Each node in a sequence hash tree stores one or several indices of strings in the input set.
An example of a sequence hash tree is shown in Fig.~\ref{fig:sht_ph}~(a).

The \emph{position heap} proposed by Ehrenfeucht~\etal{}~\cite{PH}
is a sequence hash tree for the ordered set of all suffixes of a string.
Two types of position heap are known.
The first one is proposed by Ehrenfeucht~\etal{}~\cite{PH},
that constructed by the ordered set of suffixes
in ascending order of length
and the second one is proposed by Kucherov~\cite{OPH}, 
which constructed in descending order.
We adopt the Kucherov~\cite{OPH} type and his online construction algorithm
for constructing position heaps for parameterized strings in Section~\ref{sec:p-poisition heap}.
Here we recall the definition of the position heap by Kucherov.
\begin{definition}[Position Heap~\cite{OPH}]
Given a string $t \in \Sigma^n$,
let $\set{S}_t = (t[1:], t[2:], \dots, t[n:])$ be the ordered set of all suffixes of $t$ except $\varepsilon$
in descending order of length.
The \emph{position heap} $\ph{t}$ for $t$ is $\shta{\set{S}_t}$.
\end{definition}

Each node except the $\rt$ in a position heap stores either one or two integers
those are beginning positions of corresponding suffixes. 
We call them \emph{regular node} and \emph{double node} respectively.
Assume that $i$ and $j$ are positions stored
by a double node $v$ in $\ph{t}$ where $i < j$,
$i$ and $j$ are called the \emph{primary position} and the \emph{secondary position}
respectively.
Fig.~\ref{fig:sht_ph}~(b) shows an example of a position heap. 

In order to find occurrences of the pattern in $O(m+\occ)$ time, Ehrenfeucht~\etal{}~\cite{PH} and Kucherov~\cite{OPH}
added additional pointer called \emph{maximal-reach pointer} to the position heap
and called this extended data structure as \emph{augmented position heap}.
An example of an augmented position heap is showed in Fig.~\ref{fig:sht_ph}~(c). 

\section{Parameterized Position Heap} \label{sec:p-poisition heap}
In this section,
we propose a new indexing structure called
\emph{parameterized position heap}.
It is based on the position heap proposed by Kucherov~\cite{OPH}.

\subsection{Definition and Property of Parameterized Position Heap}

\begin{figure}[t]
	\centering
	\begin{minipage}[t]{0.6\hsize}
		\centering
		\includegraphics[scale=0.4]{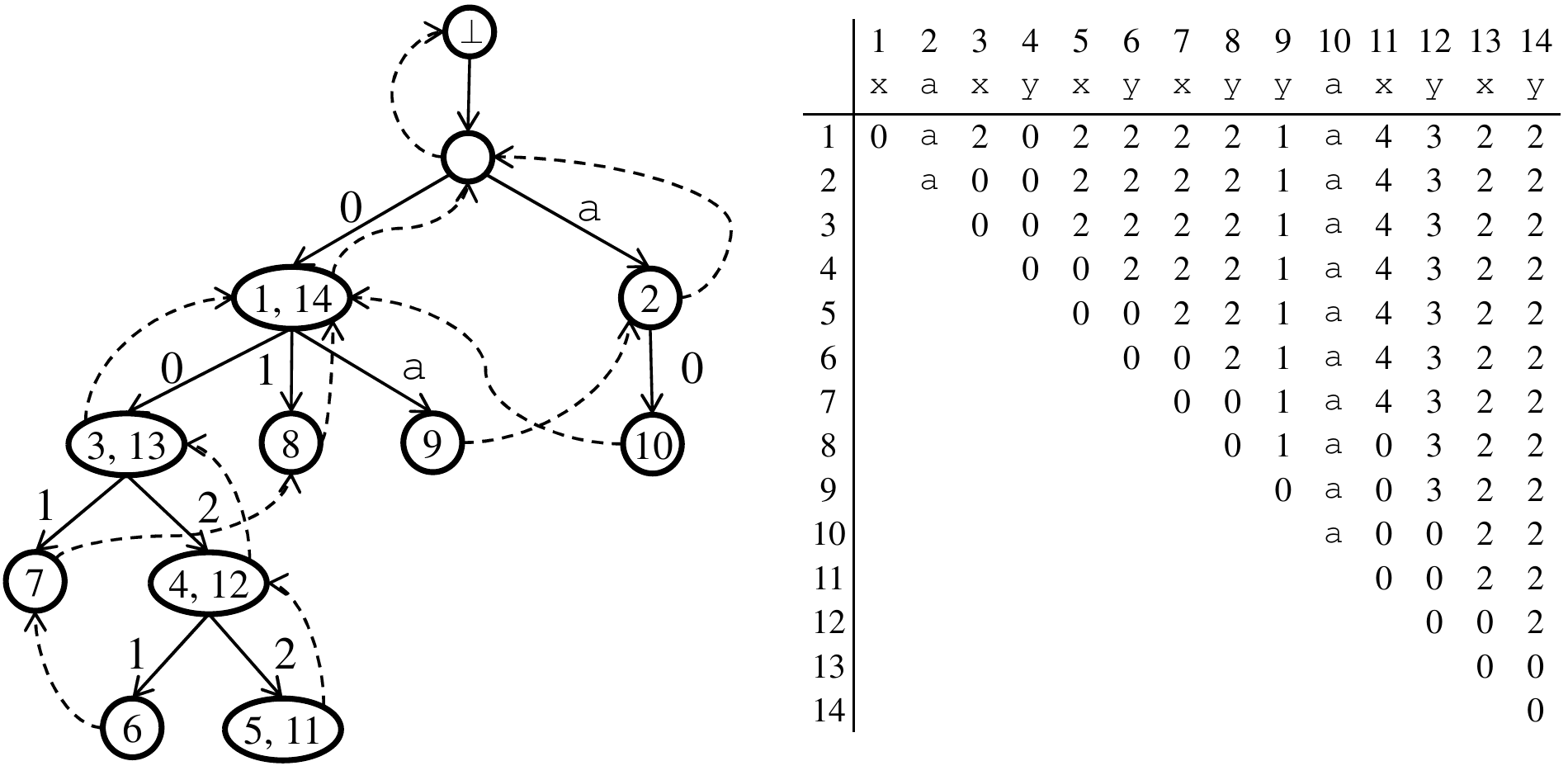}\\
		\ \ \ \scriptsize{(a)}
	\end{minipage}
	\begin{minipage}[t]{0.39\hsize}
		\centering
		\includegraphics[scale=0.4]{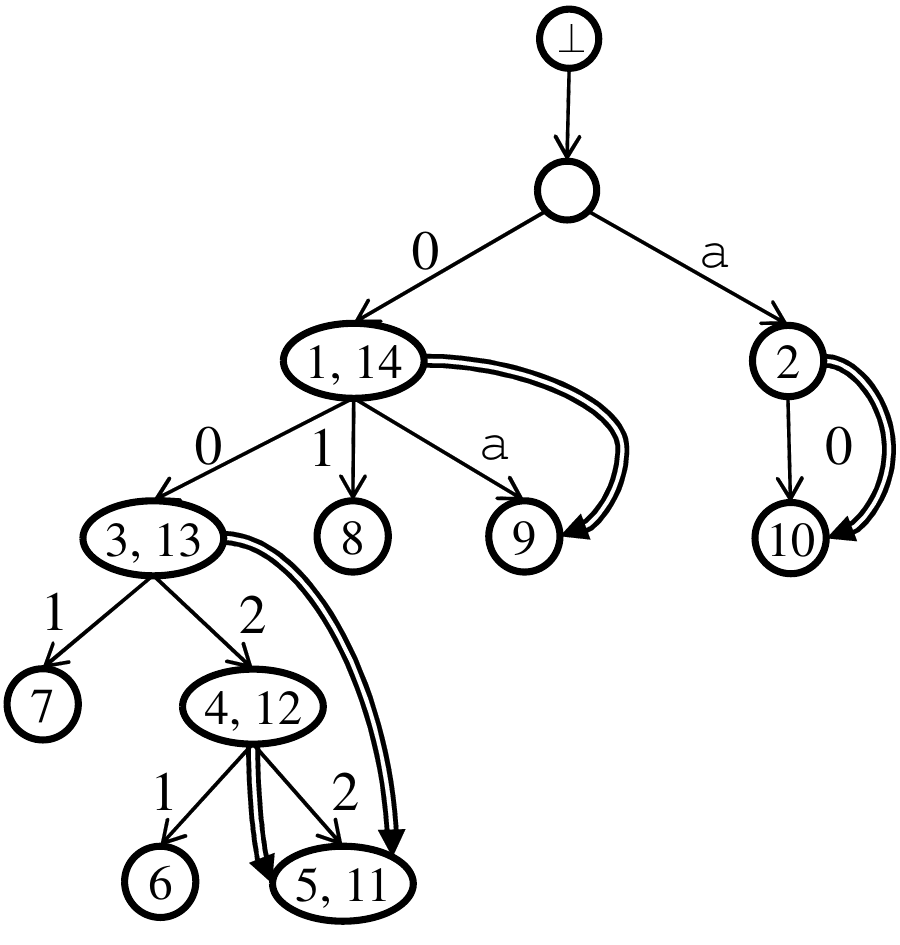}\\
		\ \ \ \scriptsize{(b)}
	\end{minipage}
	\caption{
		Let $\Sigma=\{ {\tt a}\}$, $\Pi=\{x,y\}$ and $\para{t}=x{\tt a}xyxyxyy{\tt a}xyx$.
		(a) A parameterized position heap $\pph{\para{t}}$.
		Broken arrows denote suffix pointers.
		(b) An augmented parameterized position heap $\apph{\para{t}}$.
		Parameterized maximal-reach pointers for $\pmrp{i} \neq i$ are illustrated by doublet arrows.
	}
	\label{fig:pph}
\end{figure}

The parameterized position heap
is a sequence hash tree~\cite{SHT} for the ordered set of
prev-encoded suffixes in the descending order of length.


\begin{definition}[Parameterized Position Heap]
Given a p-string $\para{t} \in (\Sigma \cup \Pi)^n$,
let $\set{\para{S}}_\para{t} = (\pre{\para{t}[1:]}, \pre{\para{t}[2:]}, \dots, \pre{\para{t}[n:]})$ be the ordered set of all prev-encoded suffixes
of the p-string $\para{t}$ except $\varepsilon$ in descending order of length.
The \emph{parameterized position heap} $\pph{\para{t}}$ for $\para{t}$ is $\shta{\set{\set{\para{S}}}_\para{t}}$.
\end{definition}


Fig.~\ref{fig:pph} (a) shows an example of a parameterized position heap. 
A parameterized position heap $\pph{\para{t}}$
for a p-string $\para{t}$ of length $n$ consists of 
the root and nodes that corresponds to $\pre{\para{t}[1:]}, \pre{\para{t}[2:]}, \dots, \pre{\para{t}[n:]}$,
so $\pph{\para{t}}$ has at most $n+1$ nodes.
Each node in $\pph{\para{t}}$ holds either one or two of beginning positions of corresponding p-suffixes
similar to the standard position heaps.
We can specify each node in $\pph{\para{t}}$ by its primary position, its secondary position,
or the string obtained by concatenating labels found on the path from the root to the node.

Different from standard position heap,
$\pre{\para{t}[i:]} = \pre{\para{t}}[i:]$ does not necessarily hold for some cases.
For example, for $\para{t}={\tt xaxyxyxyyaxyxy}$, $\pre{\para{t}[3:]}=0022221{\tt a}4322$ while
$\pre{\para{t}}[3:]=0222221{\tt a}4322$.
Therefore, the construction and matching algorithms for 
the standard position heaps cannot be directly applied for the parameterized position heaps.
However, we can similar properties to construct parameterized position heaps efficiently.

\begin{lemma}
\label{lem:pph_substr}
For $i$ and $j$, where $1 \leq i \leq j \leq n$,
if $\node{\pre{\para{t}[i:j]}}$ is represented in $\pph{\para{t}}$,
then a prev-encoded string for any substring of $\para{t}[i:j]$
is also represented in $\pph{\para{t}}$.
\end{lemma}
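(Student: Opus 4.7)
\emph{Reduction.} The plan is to reduce the full statement to the single claim: \emph{if $\pre{\para{t}[i:j]}$ is represented in $\pph{\para{t}}$ and $i < j$, then so is $\pre{\para{t}[i+1:j]}$}. Two elementary observations enable this reduction. For $a \le b \le c$, the string $\pre{\para{t}[a:b]}$ is a prefix of $\pre{\para{t}[a:c]}$, because the $m$-th symbol of either encoding depends only on $\para{t}[a:a+m-1]$. Since $\pph{\para{t}}$ is a trie, the set of node labels is closed under taking prefixes. Together, these show that once every $\pre{\para{t}[i':j]}$ with $i \le i' \le j$ is represented, every $\pre{\para{t}[i':j']}$ with $j' \le j$ is obtained as an ancestor; iterating the single-step claim along $i' = i, i+1, \ldots, j$ supplies all the starting positions.

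\emph{Pinning down the insertion step.} Let $s = \pre{\para{t}[i:j]}$ and $L = j - i + 1$. Because $s$ is represented in $\pph{\para{t}}$, there is a unique step $k$ of the online construction at which $s$ is added as $p_k$, the shortest prefix of $\sigma_k = \pre{\para{t}[k:]}$ not already in $V_{k-1}$. In particular $s$ is a prefix of $\sigma_k$, so $\pre{\para{t}[k:k+L-1]} = s$, i.e.\ $\para{t}[k:k+L-1] \approx \para{t}[i:j]$. Restricting the underlying bijection to positions after the first yields $\para{t}[k+1:k+L-1] \approx \para{t}[i+1:j]$, and therefore $\pre{\para{t}[i+1:j]}$ coincides with the length-$(L-1)$ prefix of $\sigma_{k+1}$.

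\emph{The main obstacle.} It remains to show that this length-$(L-1)$ prefix of $\sigma_{k+1}$ is represented in $\pph{\para{t}}$. This is the technical heart of the proof and is a parameterized analogue of Kucherov's suffix-link invariant. I would establish, by induction on $k$, the following statement: after step $k$, the longest prefix of $\sigma_{k+1}$ already in $V_k$ has length at least $|p_k| - 1$ (with the convention that if step $k$ makes no insertion, the claim holds vacuously). From this, $|p_{k+1}| \ge |p_k|$ (or $p_{k+1}$ does not exist), which forces the length-$(L-1)$ prefix of $\sigma_{k+1}$ to belong to $V_{k+1}$. Proving the invariant splits on $\para{t}[k]$: when $\para{t}[k] \in \Sigma$, stripping the first symbol does not affect the prev-encoding of what follows, so $\sigma_{k+1}[1:L-1] = \sigma_k[2:L]$ symbol-for-symbol and the argument reduces to the ordinary suffix-link setup of position heaps; when $\para{t}[k] \in \Pi$, the two strings differ in at most one position --- namely, where $\sigma_k$ carries a back-pointer to its first symbol, which must be rewritten to $0$ in $\sigma_{k+1}$ --- and the correction must be absorbed by exhibiting an earlier inserted suffix $\sigma_{k'}$ that already begins with $0$ at the analogous depth, supplying the corrected node before step $k$.

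\emph{Conclusion.} Once the suffix-link invariant is established, the length-$(L-1)$ prefix of $\sigma_{k+1}$ lies in the final tree, hence $\pre{\para{t}[i+1:j]}$ is represented in $\pph{\para{t}}$. Combined with the reduction, this completes the proof of the lemma.
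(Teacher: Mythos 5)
Your overall architecture (prefix-closure of the trie plus iterated stripping of the first symbol, with attention to the positions of the prev-encoding that must be rewritten to $0$) matches the paper's, but the central ``suffix-link invariant'' you rely on is false. The claim that after step $k$ the longest prefix of $\sigma_{k+1}$ already in $V_k$ has length at least $|p_k|-1$, and its consequence $|p_{k+1}|\ge|p_k|$, fail already for ordinary strings (hence also for p-strings, since constants behave identically). Take $t=\texttt{aab}$ with $\texttt{a},\texttt{b}\in\Sigma$: step $1$ inserts $p_1=\texttt{a}$, step $2$ inserts $p_2=\texttt{ab}$ of depth $2$, but $\sigma_3=\texttt{b}$ has no nonempty prefix in $V_2=\{\varepsilon,\texttt{a},\texttt{ab}\}$, so the longest such prefix has length $0<|p_2|-1$, and $|p_3|=1<|p_2|$. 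Insertion depths in a (parameterized) position heap are simply not monotone, so no invariant comparing the consecutive steps $k$ and $k+1$ in this form can hold, and your inference that the length-$(L-1)$ prefix of $\sigma_{k+1}$ lies in $V_{k+1}$ does not follow as argued.

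What you actually need is weaker and true: the length-$(|p_k|-1)$ prefix of $\sigma_{k+1}$ is present by the \emph{end} of step $k+1$ (it may be created at step $k+1$ itself, as $\texttt{b}$ is in the example above). But proving it requires running the induction along the parent chain of $p_k$, not from step $k$ to step $k+1$: the parent of $p_k$ has depth $|p_k|-1$ and was created at some earlier step $k'<k$, so by induction the length-$(|p_k|-2)$ prefix of $\sigma_{k'+1}$ lies in $V_{k'+1}\subseteq V_k$; since stripping the first symbol of p-matching strings yields p-matching strings, that prefix equals the length-$(|p_k|-2)$ prefix of $\sigma_{k+1}$, so the insertion of $\sigma_{k+1}$ at step $k+1$ reaches depth at least $|p_k|-1$. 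This parent-chain induction, with the timing $k'+1\le k$ made explicit, is precisely the role played in the paper's proof by the increasing chain of primary positions $b_0<b_1<\cdots$ of the ancestors of $\node{\pre{\para{t}[i:j]}}$. With your invariant replaced by this argument the rest of your outline goes through; note also that the case split on $\para{t}[k]\in\Sigma$ versus $\para{t}[k]\in\Pi$ then becomes unnecessary, because the equality of the stripped prev-encodings holds uniformly by the definition of $\pre{\cdot}$.
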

\begin{proof}
	First we will show that prev-encoding of any prefix of $t[i:j]$ is represented in $\pph{t}$.
	From the definition of prev-encoding,
	$\pre{\para{t}[i:j]}[1:i-j]=\pre{\para{t}[i:j-1]}$.
	In other words, $\pre{\para{t}[i:j-1]}$ is a prefix of $\pre{\para{t}[i:j]}$.
	From the definition of $\pph{\para{t}}$,
	prefixes of $\pre{\para{t}[i:j]}$ are represented in $\pph{t}$.
	Therefore, $\pre{\para{t}[i:j-1]}$ is represented in $\pph{\para{t}}$.
	Similarly, $\pre{\para{t}[i:j-2]}$, $\cdots$, $\pre{\para{t}[i:i]}$
	are represented in $\pph{\para{t}}$.
	
	Next, we will show that prev-encoding of any suffix of $t[i:j]$ is represented in $\pph{t}$.
	From the above discussion, there are positions $b_0 < b_1 < \cdots < b_{j-i} = i$ in $t$ such that
	$\pre{t[b_k:b_k+k]} = \pre{t[i:i+k]}$.
	From the definition of parameterized position heap, $\pre{t[b_{1}+1:b_{1}+{1}]}$ is represented in $\pph{t}$.
	Since $\pre{t[b_{k}+1:b_{k}+{k}]}$ is a prefix of $\pre{t[b_{k+1}+1:b_{k+1}+{k+1}]}$ for $0 < k < j-i$,
	if $\pre{t[b_{k}+1:b_{k}+{k}]}$ is represented in $\pph{t}$ then $\pre{t[b_{k+1}+1:b_{k+1}+{k+1}]}$ is also represented in $\pph{t}$ recursively.
	Therefore, $\pre{t[b_{j-i}+1:b_{j-i}+{j-i}]}= \pre{t[i+1:j]}$ is represented in $\pph{t}$.
	Similarly, $\pre{\para{t}[i+2:j]}$, $\cdots$, $\pre{\para{t}[j:j]}$ are represented in $\pph{\para{t}}$.
	
	Since any prefix and suffix of $\pre{t[i:j]}$ is represented in $\pph{t}$,
	we can say that any substring of $\pre{t[i:j]}$ is represented in $\pph{t}$ by induction.
\end{proof}


\subsection{Online Construction Algorithm of Parameterized Position Heap}
In this section,
we propose an online algorithm that constructs parameterized position heaps.
Our algorithm is based on Kucherov's algorithm, although it cannot be applied easily.
The algorithm updates $\ph{t[1:k]}$ to $\ph{t[1:k+1]}$
when $t[k+1]$ is read, where $1 \leq k \leq n-1$.
Updating of the position heap begins from a special node,
called the {\em active node}.
A position specified by the active node is called the {\em active position}.
At first, we show that there exists a position similar to the active position
in the parameterized position heap.

\begin{lemma}
\label{lem:pph_active}
If $j$ is a secondary position of a double node in
a parameterized position heap, then $j+1$ is also a secondary position.
\end{lemma}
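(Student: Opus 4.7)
The plan is to characterize secondary positions in terms of the sequence hash tree construction and then show the property propagates. In constructing $\pph{\para{t}}$, the prev-encoded suffixes are processed in order $\pre{\para{t}[1:]}, \pre{\para{t}[2:]}, \ldots, \pre{\para{t}[n:]}$, and I would first observe that a position $k$ is secondary iff the walk from the root along $\pre{\para{t}[k:]}$ in the intermediate tree just before $k$ is processed reaches the full depth $n - k + 1$; equivalently, the node $v_k := \node{\pre{\para{t}[k:]}}$ already exists at that moment. Applying this to the hypothesis, $v_j$ exists just before $j$ is processed, and since $j$ creates no node the tree is unchanged just before $j+1$ is processed. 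Thus it suffices to show that $v_{j+1} := \node{\pre{\para{t}[j+1:]}}$ exists at that same moment.

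To do this, I would let $u_1, u_2, \ldots, u_{n-j+1} = v_j$ be the nodes on the root-to-$v_j$ path, with $u_d$ at depth $d$ labeled $\pre{\para{t}[j : j+d-1]}$ and with strictly increasing primary positions $a_1 < a_2 < \cdots < a_{n-j+1} \leq j-1$. For each $d \in \{1, \ldots, n-j\}$, the equality $\pre{\para{t}[a_{d+1} : a_{d+1}+d]} = \pre{\para{t}[j : j+d]}$ gives $\para{t}[a_{d+1} : a_{d+1}+d] \approx \para{t}[j : j+d]$; stripping the first symbol from both sides preserves the parameterized match, so $\pre{\para{t}[a_{d+1}+1 : a_{d+1}+d]} = \pre{\para{t}[j+1 : j+d]}$, and the position $a_{d+1}+1 \leq j$ serves as a witness for the desired label at depth $d$.

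By induction on $d$, I would then show that the depth-$d$ node labeled $\pre{\para{t}[j+1 : j+d]}$ is already present in the tree before $j$ is processed. When $a_{d+1}+1 \leq j-1$, the inductive hypothesis places all shallower nodes on this target path in the tree before $a_{d+1}+1$ is processed, so the walk of $a_{d+1}+1$ reaches depth $d-1$ and either traverses or creates the depth-$d$ child, installing the required node. The only exception is $a_{d+1}+1 = j$, which by strict monotonicity of the $a_k$'s can occur only at $d = n-j$ with $a_{n-j+1} = j-1$; here $v_j$ is created by position $j-1$, so the label identity $\pre{\para{t}[j-1 : n-1]} = \pre{\para{t}[j : n]}$ yields, by another strip-the-first-symbol step, $\pre{\para{t}[j : n-1]} = \pre{\para{t}[j+1 : n]}$. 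This is exactly the label of $u_{n-j}$, the parent of $v_j$, so $v_{j+1} = u_{n-j}$ already lies on the existing path.

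I expect the main obstacle to be this boundary case: when $a_{n-j+1}+1 = j$, the natural witness is the very position that, by hypothesis, creates no new node, so one cannot invoke an earlier walk to manufacture $v_{j+1}$. The resolution depends on noticing that the equality $a_{n-j+1} = j-1$ forces the label of $u_{n-j}$ to coincide with $\pre{\para{t}[j+1:]}$, making $v_{j+1}$ an already-existing ancestor of $v_j$ rather than a node that needs to be freshly created. Once this is handled, the walk of $j+1$ traverses the full $\pre{\para{t}[j+1:]}$ in an unchanged tree and creates no node, so $j+1$ is a secondary position.
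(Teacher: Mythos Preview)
Your approach is correct but takes a more elementary, self-contained route than the paper. The paper dispatches the lemma in three lines by invoking Lemma~\ref{lem:pph_substr}: if $i<j$ is the primary position of the node $v=\node{\pre{\para{t}[j:]}}$, then $\pre{\para{t}[i:h]}=\pre{\para{t}[j:]}$ with $h=i+n-j$, Lemma~\ref{lem:pph_substr} supplies a node $\node{\pre{\para{t}[i+1:h]}}=\node{\pre{\para{t}[j+1:]}}$, and $j+1$ is declared secondary there. You instead unfold the relevant piece of Lemma~\ref{lem:pph_substr} from scratch, using the primary positions $a_d$ along the root-to-$v_j$ path to manufacture explicit witnesses $a_{d+1}+1\le j$ for every depth of the target path and arguing inductively that each target node is created in time. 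What your route buys is precise timing control: Lemma~\ref{lem:pph_substr} is stated only for the final tree, so the paper leaves implicit why $j+1$ is secondary rather than primary at the new node, whereas your argument makes this explicit. One small fix: the inductive hypothesis as you phrase it (``present before $j$ is processed'') is not strong enough to justify the clause ``the inductive hypothesis places all shallower nodes in the tree before $a_{d+1}+1$ is processed''; strengthen it to ``the depth-$d$ target node exists once position $a_{d+1}+1$ has been processed'', and then the strict monotonicity $a_{d'+1}<a_{d+1}$ for $d'<d$ carries the step cleanly (and in fact also absorbs your boundary case, since ``after $i+1$'' is already ``before $j+1$'' and $j$ being secondary forces traversal rather than creation---though your separate handling of that case is correct as written).
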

\begin{proof}
	Let $i$ be the primary position and $j$ be the secondary position of node $v$,
	where $i < j$.
	This means there is a position $h$ such that $\pre{t[i:h]} = \pre{t[j:]}$.
	By Lemma~\ref{lem:pph_substr},
	there is a node that represents $\pre{t[i+1:h]}$.
	Since $\pre{t[j+1:]} = \pre{t[i+1:h]}$, then $j+1$ will be the secondary positions of node $\pre{t[i+1:h]}$.
\end{proof}

Lemma~\ref{lem:pph_active} means that
there exists a position $s$ which splits all positions in $\para{t}[1:n]$
into two intervals, similar to the \emph{active position} in \cite{OPH}.
Positions in $[1:s-1]$ and $[s:n]$ are called primary and secondary positions, respectively.
We also call the position $s$ as active position.

Assume we have constructed $\pph{\para{t}[1:k]}$ and we want to construct 
$\pph{\para{t}[1:k+1]}$ from $\pph{\para{t}[1:k]}$.
The primary positions $1, \dots, s-1$ in $\pph{\para{t}[1:k]}$
become primary positions also in $\pph{\para{t}[1:k+1]}$,
because $\pre{\para{t}[i:k]}=\pre{\para{t}[i:k+1]}[1:k-1+1]$ holds for $1 \leq i \leq s-1$.
Therefore, we do not need to update the primary positions.

On the other hand,
the secondary positions $s, \dots, k$ require some modifications.
When inserting a new symbol,
two cases can occur.
The first case is that
$\pre{\para{t}[i:k+1]}$ is not represented in $\pph{\para{t}[1:k]}$.
In this case, a new node $\node{\pre{\para{t}[i:k+1]}}$ is created
as a child node of $\node{\pre{\para{t}[i:k]}}$
and position $i$ becomes the primary position of the new node.
The second case is that
$\pre{\para{t}[i:k+1]}$ was already represented in $\pph{\para{t}[1:k]}$.
In this case, the secondary position $i$
that is stored in $\node{\pre{\para{t}[i:k]}}$ currently should be moved
to the child node $\node{\pre{\para{t}[i:k+1]}}$,
and position $i$ becomes the secondary position of this node.

From Lemma~\ref{lem:pph_substr}, if the node $\node{\pre{\para{t}[i:k]}}$ has an edge to
 the node $\node{\pre{\para{t}[i:k+1]}}$,
 $\node{\pre{\para{t}[i+1:k]}}$ also has an edge to
 $\node{\pre{\para{t}[i+1:k+1]}}$.
Therefore,
there exists $r$, with $1 \leq s \leq r \leq k$,
that splits the interval $[s:k]$ into two subintervals $[s:r-1]$ and $[r:k]$,
such that the node $\node{\pre{\para{t}[i:k]}}$ does not have an edge
to $\node{\pre{\para{t}[i:k+1]}}$ for $s \leq i \leq r-1$,
and does have such an edge for $r \leq i \leq k$.

The above analysis leads to the following lemma
that specifies the modifications from $\pph{\para{t}[1:k]}$ to $\pph{\para{t}[1:k+1]}$.  

\begin{figure}[t]
	\centering
	\begin{minipage}[t]{0.49\hsize}
		\centering
		\includegraphics[scale=0.6]{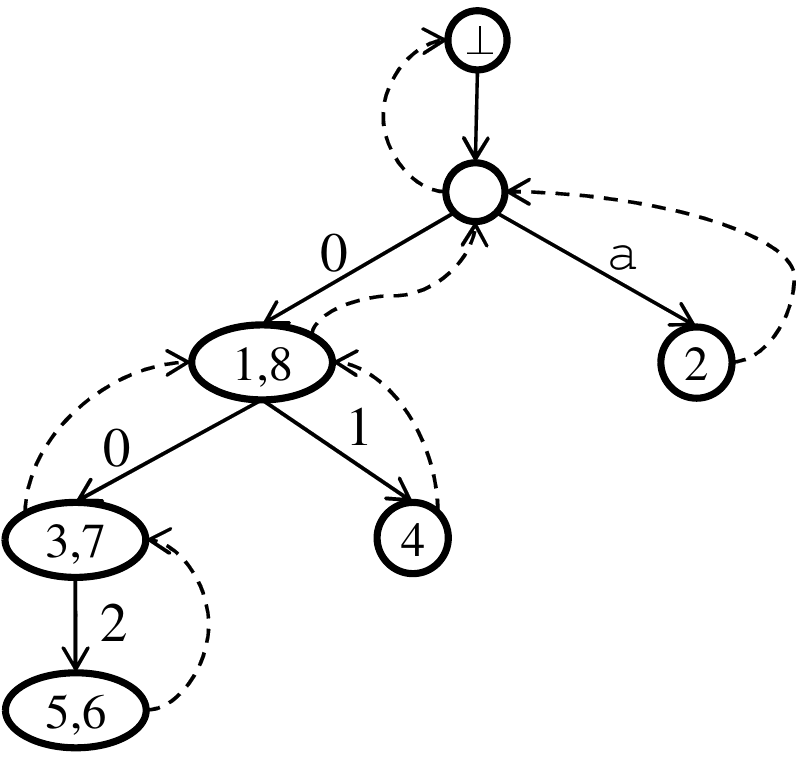}\\
		\ \ \ \scriptsize{(a)}
	\end{minipage}
	\begin{minipage}[t]{0.49\hsize}
		\centering
		\includegraphics[scale=0.6]{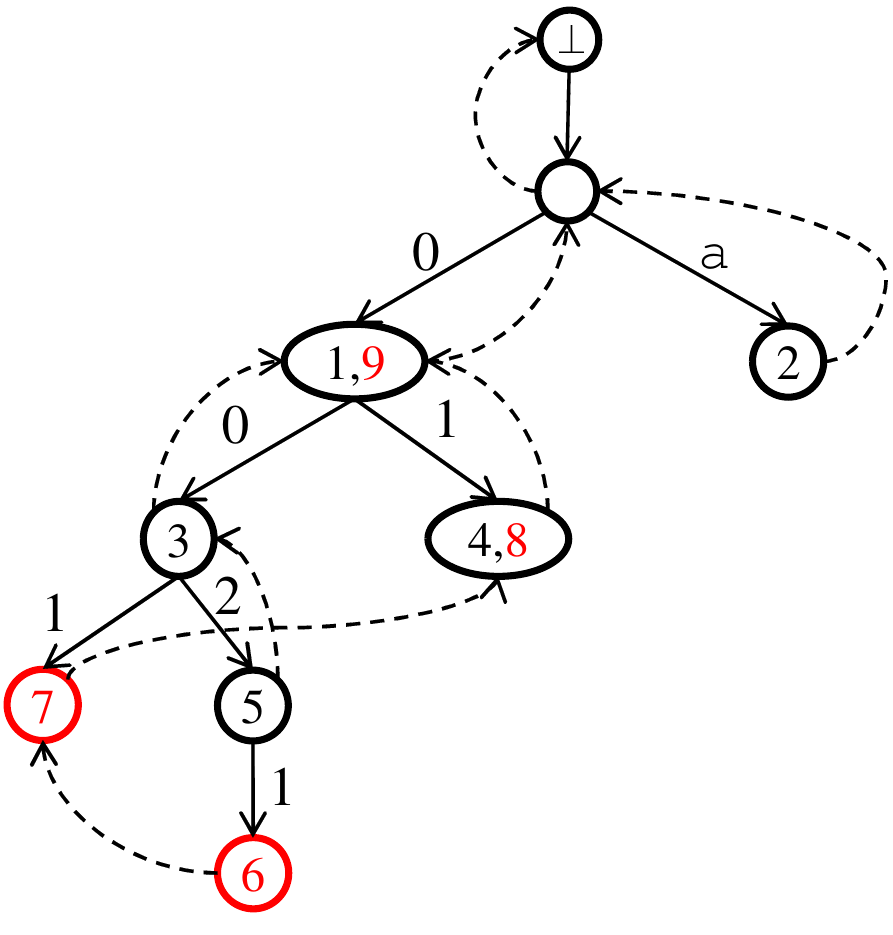}\\
		\ \ \ \scriptsize{(b)}
	\end{minipage}
	\caption{
		An example of updating a parameterized position heap, from
		(a) $\pph{x{\tt a}xyyxyx}$ to
		(b) $\pph{x{\tt a}xyyxyxx}$.
		The updated positions are colored red.
		The secondary positions $6$ and $7$ in $\pph{x{\tt a}xyyxyx}$ are become primary positions in $\pph{x{\tt a}xyyxyxx}$,
		while the secondary position $8$ in $\pph{x{\tt a}xyyxyx}$ is become a secondary position of another node in $\pph{x{\tt a}xyyxyxx}$.
		The active position is updated from $6$ to $8$.
	}
	\label{fig:pph_ex}
\end{figure}

\begin{lemma}
\label{lem:pph_modify}
Given $\para{t} \in (\Sigma \cup \Pi)^n$, consider $\pph{\para{t}[1:k]}$ for $k<n$.
Let $s$ be the active position,
stored in the node $\node{\pre{\para{t}[s:k]}}$.
Let $r \geq s$ be the smallest position such that
node $\node{\pre{\para{t}[r:k]}}$ has an outgoing edge
labeled with $\pre{\para{t}[r:k+1]}[k-r+2]$.
$\pph{\para{t}[1:k+1]}$ can be obtained by modifying
$\pph{\para{t}[1:k]}$ in the following way:
\begin{enumerate}
\item
For each node $\node{\pre{\para{t}[i:k]}}$, $s \leq i < r$,
create a new child $\node{\pre{\para{t}[i:k+1]}}$
linked by an edge labeled $\pre{\para{t}[i:k+1]}[k-i+2]$.
Delete the secondary position $i$ from the node $\node{\pre{\para{t}[i:k]}}$
and assign it as the primary position of the new node $\node{\pre{\para{t}[i:k+1]}}$,
\item
For each node $\node{\pre{\para{t}[i:k]}}$, $r \leq i \leq k$,
move the secondary position $i$ from the node $\node{\pre{\para{t}[i:k]}}$
to the node $\node{\pre{\para{t}[i:k+1]}}$.
\end{enumerate}
Moreover, $r$ will be the active position in $\pph{\para{t}[1:k+1]}$.
\end{lemma}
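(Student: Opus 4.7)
The plan is to confirm that the tree produced by applying steps 1 and 2 to $\pph{\para{t}[1:k]}$ coincides, position by position, with the sequence hash tree for the ordered set of prev-encoded suffixes of $\para{t}[1:k+1]$. Much of the reasoning is already sketched in the discussion preceding the lemma; I would collect and formalize it in three stages, treating the primary block $[1:s-1]$, the secondary block $[s:k]$ (split by $r$), and the newly exposed position $k+1$.

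The first step is to justify the definition of the split point $r$. By Lemma~\ref{lem:pph_substr}, the existence of the edge from $\node{\pre{\para{t}[i:k]}}$ to $\node{\pre{\para{t}[i:k+1]}}$ at one index propagates to $i+1$: extending a p-substring that is already represented by one symbol keeps the extended p-substring represented. Hence the set of $i \in [s:k]$ for which such an edge exists is upward-closed, i.e., equal to $[r:k]$ for the smallest such $r$, matching the definition in the statement.

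The second step is to check, case by case, that the modifications in steps 1 and 2 correspond exactly to the SHT construction rule applied to the extended suffixes. For $s \leq i < r$, the string $\pre{\para{t}[i:k+1]}$ is the shortest prefix of the extended $i$-th prev-encoded suffix that is not yet represented (since $\pre{\para{t}[i:k]}$ is represented but the required outgoing edge is missing), so the SHT rule creates the child described in step 1, makes $i$ its primary position, and removes $i$ as a secondary from the parent. For $r \leq i \leq k$, the child $\node{\pre{\para{t}[i:k+1]}}$ already exists, so $i$ should be demoted from the parent and attached at the child as a secondary, matching step 2. The same SHT rule, applied to the newly exposed one-character suffix $\pre{\para{t}[k+1:k+1]}$, places $k+1$ at the corresponding depth-one node as a secondary position.

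Finally, I would conclude that $r$ is the new active position. After the updates, positions $1, \ldots, r-1$ are primary (positions up to $s-1$ by hypothesis and positions $s, \ldots, r-1$ by step 1) while positions $r, \ldots, k+1$ are secondary, so $r$ is the smallest secondary position, which is the active position by the characterization derived from Lemma~\ref{lem:pph_active}. The main obstacle I anticipate is being precise about the edge label $\pre{\para{t}[i:k+1]}[k-i+2]$: because $\pre{\para{t}[i:k+1]}$ does not in general equal $\pre{\para{t}}[i:k+1]$, one must verify that this last symbol is determined solely by the internal structure of $\para{t}[i:k+1]$, which follows directly from the prev-encoding definition but should be stated explicitly so that the SHT rule is applied to the correct symbol.
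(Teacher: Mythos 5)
Your proposal is correct and takes essentially the same approach as the paper's proof: the same split of the secondary block at $r$ (with the upward-closedness of the edge condition coming from Lemma~\ref{lem:pph_substr}), the same case-by-case matching of steps 1 and 2 against the sequence hash tree insertion rule, and the same conclusion that $r$ becomes the active position because positions below $r$ are now primary and those from $r$ on are secondary. The additional care you take with the newly exposed position $k+1$ and with the edge label $\pre{\para{t}[i:k+1]}[k-i+2]$ only makes explicit points the paper leaves implicit.
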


\begin{proof}
Consider the first case that $i$ be a secondary position in $\pph{t[1:k]}$ and $s \leq i < r$.
From the definition of $r$, there is no node $\pre{t[i:k+1]}$ in $\pph{t[i:k]}$.
Therefore, $i$ will be a primary position of the node $\pre{t[i:k+1]}$ in $\pph{t[1:k+1]}$.
We can update the position heap from $\pph{t[1:k]}$ to $\pph{t[1:k+1]}$ by delete $i$ from secondary position of the node $\pre{t[i:k]}$
and create a new node $\pre{t[i:k+1]}$ and assign $i$ to its primary position for the case $s \leq i < r$.

Next case, $i$ be a secondary position in $\pph{t[1:k]}$ and $r \leq i \le k$.
In this case, there is a node $\pre{t[i:k+1]}$ in $\pph{t[i:k]}$ and the node $\pre{t[i:k+1]}$ is also represented in $\pph{t[i:k+1]}$.
Therefore, $i$ will be a secondary position of the node $\pre{t[i:k+1]}$ in $\pph{t[1:k+1]}$.
We can update the position heap from $\pph{t[1:k]}$ to $\pph{t[1:k+1]}$ by delete $i$ from secondary position of the node $\pre{t[i:k]}$
and assign $i$ as secondary position of the node $\pre{t[i:k+1]}$ for the case $r \leq i \le k$.

Since position $i$ for $1 \leq i < r$ be a primary position in $\pph{t[1:k+1]}$ and
position $i$ for $r \leq i \leq k+1$ be a secondary position in $\pph{t[1:k+1]}$,
$r$ will be the active position in $\pph{t[1:k+1]}$.
\end{proof}

Fig.~\ref{fig:pph_ex} show an example of updating a parameterized position heap.
The modifications specified by Lemma~\ref{lem:pph_modify} need to be
applied to all secondary positions.
In order to perform these modifications efficiently,
we use parameterized suffix pointers.

\begin{definition}[Parameterized Suffix Pointer]
For each node $\node{\pre{\para{t}[i:j]}}$ of $\pph{\para{t}}$,
the \emph{parameterized suffix pointer} of $\node{\pre{\para{t}[i:j]}}$ is defined by
$\psufp{\node{\pre{\para{t}[i:j]}}}=\node{\pre{\para{t}[i+1:j]}}$.
\end{definition}
By Lemma~\ref{lem:pph_substr},
whenever the node $\node{\pre{\para{t}[i:j]}}$ exists,
the node $\node{\pre{\para{t}[i+1:j]}}$ exists too.
This means that $\psufp{\node{\pre{\para{t}[i:j]}}}$
always exists.
During the construction of the parameterized position heap,
let $\bot$ be the auxiliary node that
works as the parent of $\rt$
and is connected to $\rt$ with an edge
labeled with any symbol $c \in \Sigma \cup {0}$.
We define $\psufp{\rt}=\bot$.

When $s$ is the active position in $\pph{\para{t}[1:k]}$,
we call $\node{\pre{\para{t}[s:k]}}$ the \emph{active node}.
If no node holds a secondary position,
$\rt$ becomes the active node and the active position is set to $k+1$.
The nodes for the secondary positions $s$, $s+1$, $\dots$, $k$
can be visited by traversing with the suffix pointers from the active node.
Thus, the algorithm only has to memorize the active position and the active node
in order to visit any other secondary positions.

Updating $\pph{\para{t}[1:k]}$ to $\pph{\para{t}[1:k+1]}$
specified by Lemma~\ref{lem:pph_modify} is processed as the following procedures.
The algorithm traverses with the suffix pointers
from the active node till the node that has the outgoing edge labeled with $\pre{\para{t}[i:k+1]}[k-i+2]$
is found, which is $i=r$.
For each traversed node,
a new node is created and linked by an edge labeled with $\pre{\para{t}[i:k+1]}[k-i+2]$
to each node.
A suffix pointer to this new node is set from the previously created node.
When the node that has the outgoing edge labeled with $\pre{\para{t}[i:k+1]}[k-i+2]$
is traversed,
the algorithm moves to the node that is led to by this edge,
and a suffix pointer to this node is set from the last created node,
then the algorithm assigns this node to be the active node.

\begin{algorithm2e}[!t]
\caption{Parameterized position heap online construction algorithm}
\label{alg:pph_algo}
\KwIn{A p-string $\para{t} \in (\Sigma \cup \Pi)^n$ }
\KwOut{A parameterized position heap $\pph{\para{t}}$}
\SetKwData{CURRENT}{currentNode}
\SetKwData{NEXT}{nextNode}
\SetKwData{ROOT}{root}
\SetKwData{UNDEFINED}{undefined}
\SetKwData{LAST}{lastCreateNode}
$\create$ $\rt$ and $\perp$ nodes\;
$\psufp{\rt} =\ \perp$\;
$\child{\perp}{c} = \rt$ for $c \in \Sigma \cup \{0\}$\;
$\CURRENT = \rt$\;
$s = 1$\;
\For{$i=1$ \textsf{to} $n$}{
	$c = \norm{\pre{t}[i]}{\depth{\CURRENT}}$\;
	$\LAST = \UNDEFINED$\;
	\While{$\child{\CURRENT}{c} = \nll$}{
		create $\newactn$\;
		$\primary{\newactn} = s$\;
		$\child{\CURRENT}{c}=\newactn$\;
		\lIf{$\LAST \neq \UNDEFINED$}{
			$\psufp{\LAST} = \newactn$}
		$\LAST = \newactn$\;
		$\CURRENT = \psufp{\CURRENT}$\;
		$c = \norm{\pre{t}[i]}{\depth{\CURRENT}}$\;
		$s = s+1$\;
	}
	$\CURRENT = \child{\CURRENT}{c}$\;
	\lIf{$\LAST \neq \UNDEFINED$}{
		$\psufp{\LAST} = \CURRENT$}
}
\While{$s \leq n$}{
	$\secondary{\CURRENT} = s$\;
	$\CURRENT = \psufp{\CURRENT}$\;
	$s = s+1$\;
}
\end{algorithm2e}

A pseudocode of our proposed construction algorithm
is given as Algorithm~\ref{alg:pph_algo}.
$\primary{v}$ and $\secondary{v}$ denotes primary and secondary positions of $v$, respectively.
From the property of prev-encoding, $\pre{\para{t}[i+1:k+1]}[k-i+1] = \pre{\para{t}[i:k+1]}[k-i+2]$ if $\pre{\para{t}[i:k+1]}[k-i+2] \in \Sigma$
or $\pre{\para{t}[i:k+1]}[k-i+2] \leq k-i$
and $\pre{\para{t}[i+1:k]}[k-i+1] = 0$ otherwise.
Therefore, we use a function $\norm{c}{j}$ that returns $c$ if $c\in\Sigma$ or $c \leq j$ and returns $0$ otherwise.

The construction algorithm consists of $n$ iterations.
In the $i$-th iteration,
the algorithm read $\para{t}[i]$ and make $\pph{\para{t}[1:i]}$.
In the $i$-th iteration,
the traversal of the suffix pointers as explained above is done.
Since the depth of the current node decreases by traversing a suffix pointer,
the number of the nodes that can be visited by traversal is $O(n)$.
For each traversed node,
all the operations such as creating a node, an edge and updating position can be done in $O(\log{(|\Sigma|+|\Pi|)})$.
Therefore,
the total time for the traversals is $O(n\log{(|\Sigma|+|\Pi|)})$.

From the above discussion,
the following theorem
is obtained.

\begin{theorem}
Given $\para{t} \in (\Sigma \cup \Pi)^n$,
Algorithm~\ref{alg:pph_algo} constructs $\pph{\para{t}}$
in $O(n\log{(|\Sigma|+|\Pi|)})$ time and space.
\end{theorem}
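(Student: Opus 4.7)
The plan is to decompose the theorem into a correctness claim and a complexity claim, and to prove each by invoking Lemma~\ref{lem:pph_modify} together with an amortized analysis of suffix-pointer traversals.

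For correctness, I would proceed by induction on $i$, showing that after the $i$-th iteration of the outer \textbf{for} loop the data structure stored is $\pph{\para{t}[1:i]}$, together with the additional invariant that \texttt{currentNode} equals the active node $\node{\pre{\para{t}[s:i]}}$ and that $\psufp{v}$ is correctly set for every previously created node $v$. The base case $i=0$ is immediate from the initialization of $\rt$, $\bot$, and the dummy edges from $\bot$. For the inductive step I would match the two branches inside the loop to the two cases of Lemma~\ref{lem:pph_modify}: the \textbf{while} loop implements case~(1) by walking the suffix-pointer chain $\node{\pre{\para{t}[s:i-1]}}, \node{\pre{\para{t}[s+1:i-1]}},\dots$, creating a new child for each node encountered, transferring the secondary index to its primary position, and linking the previous \texttt{lastCreateNode} to the new one via a suffix pointer; the line just after the loop implements the transition to case~(2), where the remaining secondary indices $r,r+1,\dots$ are left implicit because they already sit in nodes that possess the required outgoing edge, and only the last suffix pointer has to be set and \texttt{currentNode} advanced to the new active node. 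Finally, the trailing \textbf{while} loop writes the residual secondary positions that never migrated down the tree.

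The only subtle point in the correctness argument is that the label of the edge inserted at step $i$ when we stand at node $\node{\pre{\para{t}[j:i-1]}}$ must equal $\pre{\para{t}[j:i]}[i-j+1]$, while the algorithm only has $\pre{\para{t}}[i]$ available. This is precisely why the helper $\norm{c}{d}$ is introduced: I would verify that $\pre{\para{t}[j:i]}[i-j+1]$ either agrees with $\pre{\para{t}}[i]$ (when the latter is a constant, or a back-reference whose offset does not escape the suffix starting at $j$) or equals $0$ (when the offset would reach above position $j$, i.e.\ when $\pre{\para{t}}[i] > i-j = \depth{\texttt{currentNode}}$). A short case analysis on the definition of prev-encoding closes this. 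The same normalization also justifies the use of the dummy edges from $\bot$ so that the traversal terminates cleanly whenever \texttt{currentNode} reaches the root and is pushed one step further.

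For the complexity bound, I would argue that the total number of \textbf{while}-loop iterations across the whole run is at most $n$, since each iteration creates exactly one new node and $\pph{\para{t}}$ contains at most $n+1$ nodes; the remaining work per outer iteration is $O(1)$ plus the cost of a single child lookup. Using a balanced search tree per node to index children by label, every child lookup, insertion, and edge label test costs $O(\log(|\Sigma|+|\Pi|))$, yielding the claimed $O(n\log(|\Sigma|+|\Pi|))$ running time. The space bound is immediate from the node count together with the constant extra fields (primary and secondary positions, suffix pointer, and children map) stored per node. I anticipate the main obstacle to be the verification of the \textbf{normalize} step, because it requires turning the structural statement of Lemma~\ref{lem:pph_modify} into a claim about the concrete encoded symbol the algorithm sees; everything else is a direct translation of Lemmas~\ref{lem:pph_substr}--\ref{lem:pph_modify} and an amortization argument already familiar from Kucherov's algorithm.
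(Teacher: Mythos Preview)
Your proposal is correct and follows essentially the same approach as the paper: correctness via Lemma~\ref{lem:pph_modify} together with the $\mathit{normalize}$ justification, and complexity via an amortized bound on the total number of inner-loop iterations. The only cosmetic difference is that you bound the inner \textbf{while} loop by counting created nodes (at most $n$), whereas the paper bounds it by observing that each suffix-pointer traversal decreases the depth of \texttt{currentNode} by one while each outer iteration increases it by at most one; both arguments yield the same $O(n)$ total and are interchangeable.
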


\subsection{Augmented Parameterized Position Heaps}
We will describe \emph{augmented parameterized position heaps},
the parameterized position heaps with an additional data structure
called the \emph{parameterized maximal-reach pointers}
similar to the maximal-reach pointers for the position heap~\cite{PH}.
The augmented parameterized position heap
gives an efficient algorithm for parameterized pattern matching.

\begin{definition}[Parameterized Maximal-Reach Pointer]
For a position $i$ on $\para{t}$, 
a \emph{parameterized maximal-reach pointer of $\pmrp{i}$} is a pointer from node $i$
to the deepest node whose path label is a prefix of $\pre{\para{t}[i:]}$.
\end{definition}

Obviously, if $i$ is a secondary position, then $\pmrp{i}$ is node $i$ itself.
We assume that the parameterized maximal-reach pointer for a double node
applies to the primary position of this node.
Fig.~\ref{fig:pph} (b) shows an example of an augmented parameterized position heap.
Given a prev-encoded p-string $\pre{\para{w}}$
represented in an augmented parameterized position heap $\apph{\para{t}}$ and a position $1 \leq i \leq n$,
we can determine whether $\pre{\para{w}}$ is a prefix of $\pre{\para{t}[i:]}$ or not in $O(1)$ time
by checking whether $\pmrp{i}$ is a descendant of $\pre{\para{w}}$ or not.
It can be done in $O(1)$ time
by appropriately preprocessing $\apph{t}$~\cite{INTROALG}.

Parameterized maximal-reach pointers can be computed by using parameterized suffix pointers, similar to \cite{OPH}.
Algorithm~\ref{alg:apph_algo} shows an algorithm to compute parameterized maximal-reach pointers.
$\pmrp{i}$ is computed iteratively for $i= 1, 2, \cdots, n$.
Assume that we have computed $\pmrp{i}$ for some $i$.
Let $\pmrp{i}=\pre{\para{t}[i:l]}$.
Obviously, $\pre{\para{t}[i+1:l]}$ is a prefix of the string
represented by $\pmrp{i+1}$.
Thus, in order to compute $\pmrp{i+1}$,
we should extend the prefix $\pre{\para{t}[i+1:l]} =\psufp{\pre{\para{t}[i:l]}}$ in $\pph{t}$
until we found $l'$ such that node $\pre{\para{t}[i+1:l']}$ does not have outgoing edge labeled with $\pre{\para{t}[i+1:]}[l'-i+1]$
and set $\pmrp{i+1} = \pre{\para{t}[i+1:l']}$.
In this time,
we need re-compute $\pre{\para{t}[i+1:]}$
by replacing $\pre{\para{t}[i+1:]}[j]$ with $0$ if we found that $\pre{\para{t}[i+1:]}[j] \geq j$.
The total number of extending $\pre{\para{t}[i+1:l]}$ in the algorithm
is at most $n$ because both $i$ and $l$ always increase in each iteration.
In each iteration, operations such as traversing a child node 
can be done in $O(\log{(|\Sigma|+|\Pi|)})$.
Therefore, we can get the following theorem.
\begin{theorem}
	Parameterized maximal-reach pointers for $\pph{\para{t}}$ 
	can be computed in $O(n\log{(|\Sigma|+|\Pi|)})$ time.
\end{theorem}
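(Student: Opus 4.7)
The plan is to establish the bound by amortizing the total work done by the iterative algorithm already sketched before the theorem. First I would fix notation: write $\pmrp{i} = \node{\pre{\para{t}[i:l_i]}}$ and denote its depth by $d_i = l_i - i + 1$. The algorithm computes $\pmrp{i+1}$ by following the parameterized suffix pointer to $\psufp{\pmrp{i}} = \node{\pre{\para{t}[i+1:l_i]}}$ — this node exists by Lemma~\ref{lem:pph_substr} — and then descending as far as possible along child edges consistent with $\pre{\para{t}[i+1:]}$. Correctness follows from the fact that, by Lemma~\ref{lem:pph_substr} applied to the path from $\rt$ down to $\pmrp{i}$, the prefix $\pre{\para{t}[i+1:l_i]}$ of $\pre{\para{t}[i+1:]}$ is already represented, so the descent from $\psufp{\pmrp{i}}$ simply needs to push the match as deep as the tree permits.

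Second, I would justify handling the prev-encoding shift on the fly. The path label of $\pmrp{i+1}$ is $\pre{\para{t}[i+1:]}$, which differs from a suffix of $\pre{\para{t}}$ precisely at positions where a back-pointer in $\pre{\para{t}}$ referred to an index at or before $i$. To read the correct symbol when inspecting a child at depth $l'$, the algorithm applies $\norm{\pre{\para{t}}[i+l']}{l'-1}$, which replaces any back-reference that would fall outside $\para{t}[i+1:]$ by $0$. This takes $O(1)$ per step, and the child-edge lookup itself costs $O(\log(|\Sigma|+|\Pi|))$, matching the construction bound.

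Finally, for the amortized running time I would exploit that a suffix pointer step decreases depth by exactly $1$ and each descent increases depth by exactly $1$. If $k_{i+1}$ is the number of descents performed when computing $\pmrp{i+1}$ from $\pmrp{i}$, then $d_{i+1} = d_i - 1 + k_{i+1}$, so $\sum_{i=1}^{n-1} k_{i+1} = d_n - d_1 + (n-1) \le 2n$. Adding the at most $d_1 \le n$ descents for the initial computation of $\pmrp{1}$ from $\rt$, the total number of edge traversals is $O(n)$, and multiplying by the per-step cost gives the claimed $O(n\log(|\Sigma|+|\Pi|))$ bound. The main obstacle is the prev-encoding normalization: one must argue that the local fix via $\norm{\cdot}{\cdot}$ faithfully reproduces $\pre{\para{t}[i+1:]}$ without ever needing a global recomputation, since the only way a symbol can change when shifting the window start from $i$ to $i+1$ is by losing the reference to the removed leftmost occurrence, which is exactly the case $\norm{\cdot}{\cdot}$ detects.
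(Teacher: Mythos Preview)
Your proposal is correct and follows essentially the same approach as the paper: iterate $i$, jump via the parameterized suffix pointer from $\pmrp{i}$, descend greedily using $\norm{\cdot}{\cdot}$ to recover the shifted prev-encoding, and amortize via the observation that the right endpoint (equivalently, the depth) only moves forward. The paper phrases the amortization as ``both $i$ and $l$ always increase'' rather than an explicit telescoping on $d_i$, but this is the same potential argument, and your justification of the $\norm{\cdot}{\cdot}$ fix is, if anything, more careful than the paper's one-line remark.
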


\begin{algorithm2e}[!t]
	\caption{Augmented parameterized position heap construction algorithm}
	\label{alg:apph_algo}
	\KwIn{A p-string $\para{t} \in (\Sigma \cup \Pi)^n$ and $\pph{t}$}
	\KwOut{An augmented parameterized position heap $\apph{\para{t}}$}
	\SetKwData{CURRENT}{currentNode}
	\SetKwData{NEXT}{nextNode}
	\SetKwData{ROOT}{root}
	\SetKwData{UNDEFINED}{undefined}
	\SetKwData{LAST}{lastCreateNode}
	let $t[n+1] = \$$ where \$ is a symbol that does not appear in $t$ elsewhere\;
	$\CURRENT = \rt$\;
	$l = 1$\;
	\For{$i=1$ \textsf{to} $n$}{
		$c = \norm{\pre{t}[l]}{l-i}$\;
		\While{$\child{\CURRENT}{c} \neq \nll$}{
			$\CURRENT = \child{\CURRENT}{c}$\;
			$l = l + 1$\;
			$c = \norm{\pre{t}[l]}{l-i}$\;
		}
		$\pmrp{i} = \CURRENT$\;
		$\CURRENT = \psufp{\CURRENT}$\;
	}
\end{algorithm2e}

\subsection{Parameterized Pattern Matching with Augmented Parameterized Position Heaps}

\begin{algorithm2e}[!t]
\caption{Parameterized pattern matching algorithm with APPH}
\label{alg:PARA_MRP_MATCH}
\KwIn{$\para{t} \in (\Sigma \cup \Pi)^n$ , $\para{p} \in (\Sigma \cup \Pi)^m$, and  $\apph{\para{t}}$}
\KwOut{The list $ans$ of position $i$ such that $\pre{\para{p}}=\pre{\para{t}[i:i+m-1]}$ }
\SetKwData{BREAK}{break}
\SetKwData{EMPTYLIST}{empty list}

let $w$ be the longest prefix of $\pre{p}$ represented in $\apph{t}$ and $u$ be the node represents $w$\;
\If{$|w|=m$}{
	$v = \rt$\;
	\For{$i=1$ {\bf to} $m$}{
	\label{ALG_PMM:VISIT}
		$v$ = $\child{v}{\pre{p}[i]}$\;
		\lIf{$\pmrp{v} \in \descendant{\apph{\para{t}}}{u}$}{
			add $\primary{v}$ to $\ans$}
	}
	add all primary and sedondary position of decendants of $u$ to $ans$\;\label{ALG_PMM:TRAV1}
}
\Else{
	$v = \rt$\;
	$i=1,j=1$\;
	\While{$i \leq |w|$}{
		$v$ = $\child{v}{\pre{p}[i]}$\;
		$i = i+1$\;
		\lIf{$\pmrp{v} = u$}{
			add $\primary{v}$ to $\ans$}
	}
	\While{$i \neq m$}{		\label{ALG_PMM:LOOP1}
	$j = i, v = \rt$\;
	$\zero$ = \EMPTYLIST\;
	\While{$i \neq m$}{ \label{ALG_PMM:LOOP2}
		$c = \norm{\pre{p}[i]}{i-j}$\;
		\lIf{$\child{v}{c} =\nll$}{
			{\bf break}}
		\lIf{$c = 0$}{
			add $i$ to $\zero$}
		$v = \child{v}{c}$\;
		$i = i+1$\;
	}
		\lIf{$v=\rt$}{
			{\bf return} \EMPTYLIST}
		\ForEach{$i' \in \ans$}{	\label{ALG_PMM:LOOP3}
			\If{$i=m$}{ \label{ALG_PMM:VERIFY_FIRSTOCC}
				\lIf{$\pmrp{i'+j-1} \notin \descendant{\apph{\para{t}}}{v}$}{
					remove $i'$ from $ans$}	\label{ALG_PMM:VERIFY_OCC1}
			}
			\Else{
				\lIf{$\pmrp{i'+j-1} \neq v$}{
					remove $i'$ from $ans$}	\label{ALG_PMM:VERIFY_OCC2}
			}
			\For{$k=1$ {\bf to} $|\zero|$}{
				\If{$\norm{\pre{t}[i'+\zero[k]-1]}{\zero[k]-1} \neq \pre{p}[\zero[k]]$}{
					remove $i'$ from $ans$;\ \ \ 
				}
			}
		}	\label{ALG_PMM:LOOP3_END}
	} 
}
{\bf return} $\ans$\;
\end{algorithm2e}

Ehrenfeucht \etal{}~\cite{PH} and Kucherov~\cite{OPH} split a pattern $p$ into segments
$q_1, q_2, \cdots, q_k$,
then compute occurrences of $q_1 q_2 \cdots q_{j}$ iteratively for $j=1, \cdots, k$.
The correctness depends on a simple fact that for strings
$x = t[i:i+|x|-1]$ and $y = t[i+|x|:i+|x|+|y|-1]$ implies $xy = t[i:i+|xy|-1]$.
However, when $x$, $y$, and $t$ are p-strings, $\pre{\para{x}} = \pre{\para{t}[i:i+|\para{x}|-1]}$ and
$\pre{\para{y}} = \pre{\para{t}[i+|\para{x}|:i+|\para{x}|+|\para{y}|-1]}$ does not necessarily implies 
$\pre{\para{xy}} = \pre{\para{t}[i:i+|\para{xy}|-1]}$.
Therefore, we need to modify the matching algorithm for parameterized strings.

Let $\para{x}$, $\para{y}$ and $\para{w}$ be p-strings such that
$|\para{w}|=|\para{xy}|$, $\pre{\para{x}}=\pre{\para{w}[:|x|]}$ and
$\pre{\para{y}}=\pre{\para{w}[|x|+1:]}$.
Let us consider the case that $\pre{\para{xy}} \neq \pre{\para{w}}$.
From $\pre{\para{x}}=\pre{\para{w}[:|x|]}$ and $\pre{\para{y}}=\pre{\para{w}[|x|+1:]}$,
$\para{x}$ and $\para{y}$ have the same structure of $\para{w}[:|x|]$ and $\para{w}[|x|+1:]$,
respectively.
However, the parameter symbols those are prev-encoded into $0$ in $\pre{\para{y}}$ and $\pre{\para{w}[|x|+1:]}$,
might be encoded differently in $\pre{xy}$ and $\pre{w}$, respectively.
Therefore, we need to check whether $\pre{xy}[|x|+i] = \pre{w}[|x|+i]$ if $\pre{y}[i]=0$.
Given $\pre{\para{xy}}$ and the set of positions of $0$ in $\pre{\para{y}}$,
$\zero = \{ i~|~1 \leq i \leq |\para{y}|~{\rm such~that}~\pre{\para{y}}[i]=0 \}$.
We need to verify whether $\pre{xy}[|x|+i] = \pre{w}[|x|+i]$ or not for $i \in \zero$.
Since the size of $\zero$ is at most $|\Pi|$,
this computation can be done in $O(|\Pi|)$ time.

\begin{figure}[t]
	\centering
	\begin{minipage}[t]{0.32\hsize}
		\centering
		\includegraphics[scale=0.45]{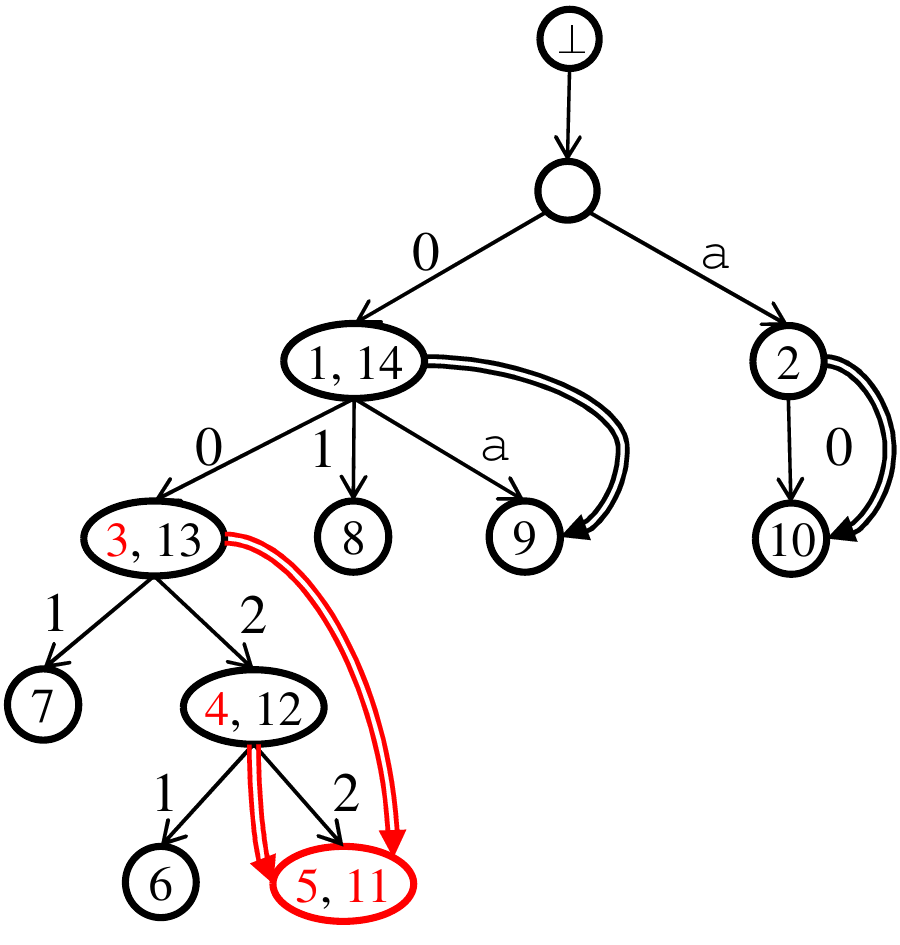}\\
		\ \ \ \scriptsize{(a)}
	\end{minipage}
	\begin{minipage}[t]{0.66\hsize}
		\centering
		\includegraphics[scale=0.45]{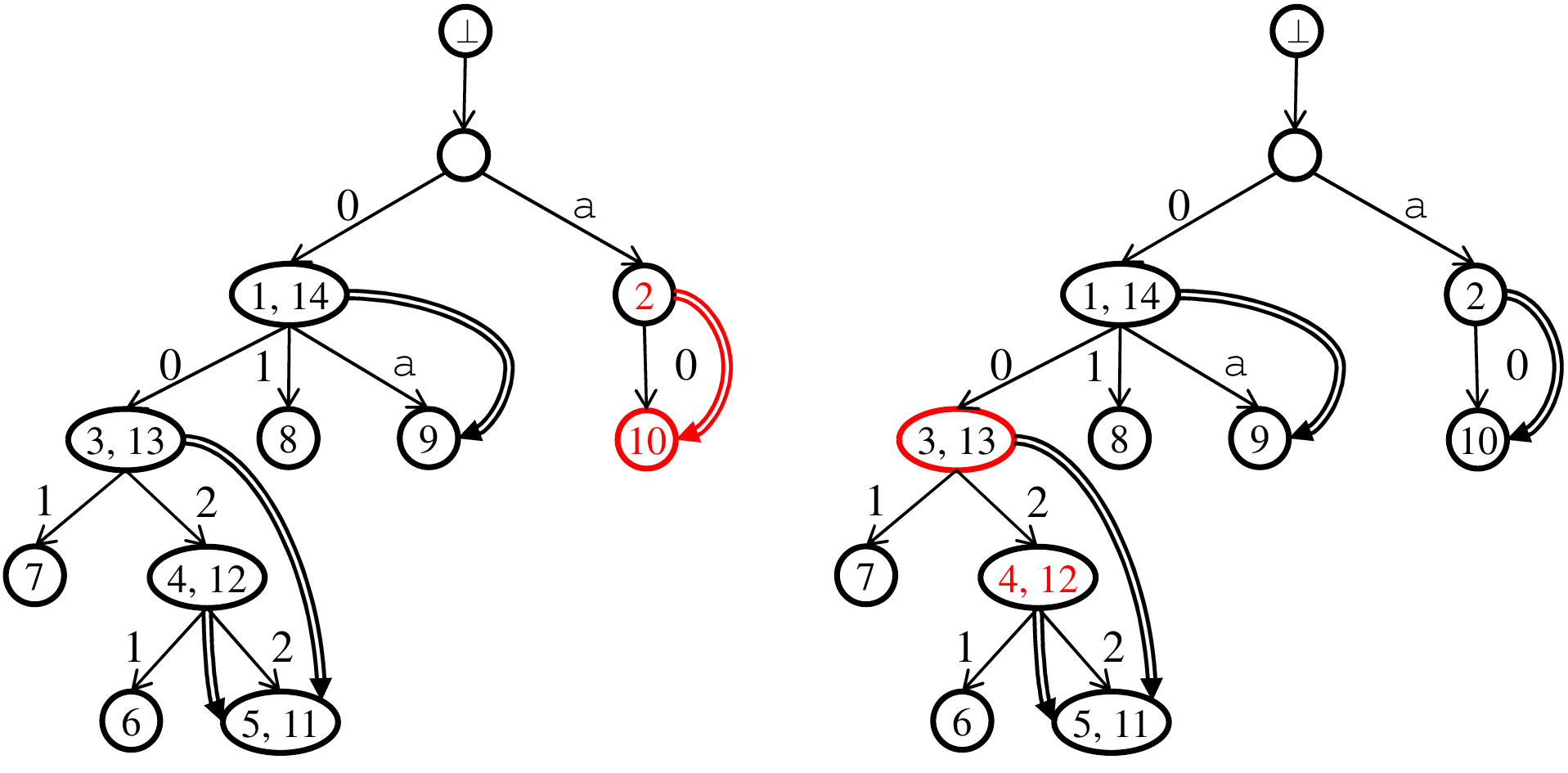}\\
		\scriptsize{(b)}
	\end{minipage}
	\caption{
		Examples of finding occurrence positions of a pattern using an augmented parameterized position heap $\pph{x{\tt a}xyxyxyy{\tt a}xyxy}$.
		(a) Finding $xyxy$ ($\pre{xyxy} = 0022$).
		(b) Finding ${\tt a}xyx$ ($\pre{{\tt a}xyx} = {\tt a}002$).
	}
	\label{fig:apph_match}
\end{figure}

A pseudocode of proposed matching algorithm for the parameterized pattern matching problem
is shown in Algorithm~\ref{alg:PARA_MRP_MATCH}.
$\descendant{\apph{\para{t}}}{u}$ denotes the set of all descendants of node $u$ in $\apph{t}$ including node $u$ itself.
The occurrences of $p$ in $t$ have the following properties on $\apph{t}$.
\begin{lemma} \label{lem:occur}
	If $\pre{p}$ is represented in $\apph{t}$ as a node $u$ then $p$ occurs at position $i$ iff $\pmrp{i}$ is $u$ or its descendant.
\end{lemma}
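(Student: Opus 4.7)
The plan is to unfold both directions via the characterisation that $p$ occurs at position $i$ in $t$ iff $\pre{p}=\pre{\para{t}[i:i+m-1]}$, together with the following elementary fact about prev-encoding that I will state and use as the pivot of the argument: for any p-string $\para{w}$ and any $1\le \ell \le |\para{w}|$, the string $\pre{\para{w}[1:\ell]}$ equals $\pre{\para{w}}[1:\ell]$, i.e.\ the prev-encoding of a prefix coincides with the corresponding prefix of the prev-encoding. This holds because the $j$-th symbol of $\pre{\para{w}}$, by definition, is determined only by $\para{w}[1],\dots,\para{w}[j]$, so truncating $\para{w}$ on the right cannot alter any earlier entry. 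In particular, $\pre{\para{t}[i:i+m-1]}$ is precisely the length-$m$ prefix of $\pre{\para{t}[i:]}$.

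For the forward direction, suppose $p$ occurs at position $i$. Then $\pre{p}=\pre{\para{t}[i:i+m-1]}$, and by the pivot fact $\pre{p}$ is a prefix of $\pre{\para{t}[i:]}$. The hypothesis is that $\pre{p}$ is represented in $\apph{\para{t}}$ as node $u$, so $u$ lies on the root-to-leaf path spelling the longest prefix of $\pre{\para{t}[i:]}$ that is realised in the heap. By Definition of $\pmrp{i}$, that node at the end of this path is $\pmrp{i}$, hence $\pmrp{i}$ is $u$ or a descendant of $u$.

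For the reverse direction, suppose $\pmrp{i}\in\descendant{\apph{\para{t}}}{u}$. By the definition of the maximal-reach pointer, the path label of $\pmrp{i}$ is a prefix of $\pre{\para{t}[i:]}$, and since $u$ is an ancestor of (or equal to) $\pmrp{i}$, the path label of $u$, namely $\pre{p}$, is a prefix of the path label of $\pmrp{i}$, hence a prefix of $\pre{\para{t}[i:]}$ of length $m$. Applying the pivot fact in reverse gives $\pre{p}=\pre{\para{t}[i:]}[1:m]=\pre{\para{t}[i:i+m-1]}$, so $p\approx \para{t}[i:i+m-1]$ and $p$ occurs at position $i$.

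The main obstacle I anticipate is articulating the pivot fact precisely and justifying it from the piecewise definition of $\pre{\cdot}$, since after this fact is available both directions become essentially a direct reading of the definitions of path labels and $\pmrp{\cdot}$ in the trie. Minor care is also needed at the boundary where $i+m-1>n$ (in which case no occurrence exists and, symmetrically, the path label of any descendant of $u$ cannot have length $m$ as a prefix of $\pre{\para{t}[i:]}$), and in treating the trivial case $u=\rt$, which corresponds to $m=0$ and is vacuously true.
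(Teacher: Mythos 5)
Your proposal is correct and takes essentially the same route as the paper: both directions are obtained by unfolding the definition of $\pmrp{i}$ as the deepest node whose path label is a prefix of $\pre{\para{t}[i:]}$, together with the fact that prev-encoding commutes with taking prefixes (which the paper uses implicitly). Your write-up is in fact more careful than the paper's own terse argument, but it is not a different proof idea.
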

\begin{proof}
	Let $u$ be a node represents $\pre{p}$. Assume $p$ occurs at position $i$ in $t$ and represented in $\apph{t}$ as $\pre{t[i:k]}$.
	Since either $\pre{t[i:k]}$ is a prefix of $\pre{p}$ or $\pre{p}$ is a prefix of $\pre{t[i:k]}$,
	then $i$ is either an ancestor or descendant of $u$.
	For both cases $\pmrp{i}$ is a descendant of $u$,
	because $p$ occurs at position $i$.  
	
	Next let $i$ be a node such that $\pmrp{i}$ is a descendant of $u$ and represents $\pre{t[i:k]}$.
	In this case, $\pre{p}$ is a prefix of $\pre{t[i:k]}$.
	Therefore $p$ occurs at $i$.
\end{proof}

\begin{lemma} \label{lem:split}
	Assume $\pre{p}$ is not represented in $\apph{t}$.
	We can split $p$ into $q_1, q_2, \cdots, q_k$ such that $q_j$ is the longest prefix of $\pre{p[|q_1\cdots q_{j-1}|+1:]}$ that is represented in $\apph{t}$.
	If $p$ occurs at position $i$ in $t$,
	then $\pmrp{i+|q_1\cdots q_{j-1}|}$
	is the node $\pre{q_j}$ for $1 \leq j < k$
	and $\pmrp{i+|q_1\cdots q_{k-1}|}$ is the node $\pre{q_k}$ or its descendant.
\end{lemma}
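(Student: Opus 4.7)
The plan is to read off the statement directly from the definition of the parameterized maximal-reach pointer together with the fact that prev-encoding commutes with taking prefixes. Fix $j$ and let $l_j=|q_1\cdots q_{j-1}|$. Since $p$ occurs at $i$, the suffix $p[l_j+1{:}]$ p-matches $t[i+l_j{:}i+m-1]$, so $\pre{p[l_j+1{:}]}=\pre{t[i+l_j{:}i+m-1]}$. Because the prev-encoding of a prefix equals the corresponding prefix of the prev-encoding, the right-hand side equals $\pre{t[i+l_j{:}]}[1{:}m-l_j]$. Consequently $\pre{q_j}$, being a prefix of $\pre{p[l_j+1{:}]}$, is also a prefix of $\pre{t[i+l_j{:}]}$ of length at most $m-l_j$, and by the definition of the splitting it is represented in $\apph{t}$.

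Next I would invoke the definition of the maximal-reach pointer: $\pmrp{i+l_j}$ is the deepest node of $\apph{t}$ whose path label is a prefix of $\pre{t[i+l_j{:}]}$. Since the node $\node{\pre{q_j}}$ has that property, $\pmrp{i+l_j}$ is either $\node{\pre{q_j}}$ itself or a proper descendant. This already establishes the statement for $j=k$.

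To finish the case $j<k$, I would argue by contradiction that $\pmrp{i+l_j}$ cannot lie strictly below $\node{\pre{q_j}}$. If it did, then $\apph{t}$ would contain a node whose path label is a prefix of $\pre{t[i+l_j{:}]}$ of length at least $|q_j|+1$. Since $j<k$ forces $q_j$ to be a strict prefix of $p[l_j+1{:}]$, we have $|q_j|+1\leq m-l_j$, so this longer prefix still lies within the range on which $\pre{t[i+l_j{:}]}$ agrees with $\pre{t[i+l_j{:}i+m-1]}=\pre{p[l_j+1{:}]}$. Thus a prefix of $\pre{p[l_j+1{:}]}$ strictly longer than $\pre{q_j}$ would be represented in $\apph{t}$, contradicting the maximality condition in the definition of $q_j$. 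The only delicate bookkeeping is to distinguish $\pre{w[a{:}b]}$ from $\pre{w}[a{:}b]$ and use that they coincide precisely when $a=1$, which is exactly why we need to shift the starting index to $i+l_j$ before invoking the prefix identity.
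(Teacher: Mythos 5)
Your proof is correct and follows essentially the same route as the paper's: show that $\pmrp{i+|q_1\cdots q_{j-1}|}$ must be the node $\node{\pre{q_j}}$ or a descendant because $\pre{q_j}$ is a prefix of $\pre{t[i+|q_1\cdots q_{j-1}|:]}$, and then, for $j<k$, rule out a proper descendant since it would yield a longer represented prefix of $\pre{p[|q_1\cdots q_{j-1}|+1:]}$, contradicting the maximality of $q_j$, while no such contradiction is available for $j=k$. Your version merely spells out the bookkeeping (prev-encoding commuting with prefixes and the bound $|q_j|+1\le m-|q_1\cdots q_{j-1}|$) that the paper leaves implicit.
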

\begin{proof}
	Let $p=q_1 q_2 \cdots q_k$ occurs at position $i$ in $t$.
	Since $\pre{q_1}$ is a prefix of $\pre{p}$,
	then $\pmrp{i}$ is the node that represents $\pre{q_1}$ or its descendant.
	However, if $\pmrp{i}$ is a descendant of node $\pre{q_1}$,
	then we can extend $q_1$ which contradicts with the definition of $q_1$.
	Therefore, $\pmrp{i}$ is the node represents $\pre{q_1}$.
	
	Similarly for $1<j<k$, $\pre{q_j}$ is a prefix of $\pre{p[|q_1\cdots q_{j-1}|+1:]}$ and occurs at position $i + |q_1\cdots q_{j-1}|$ in $t$.
	Therefore, $\pmrp{i + |q_1\cdots q_{j-1}|}$ is the node represents $\pre{q_j}$.
	Last, since $q_k$ is a suffix of $p$, then $\pmrp{i + |q_1\cdots q_{j-1}|}$ can be the node $\pre{q_k}$ or its descendant.
\end{proof}

Algorithm~\ref{alg:PARA_MRP_MATCH} utilizes Lemmas~\ref{lem:occur}~and~\ref{lem:split} to find occurrences of $p$ in $t$ by using $\apph{t}$.
First, if $\pre{p}$ is represented in $\apph{t}$ then the algorithm will output all position $i$ such that $\pmrp{i}$ is a node $\pre{p}$ or its descendant.
Otherwise, it will split $p$ into $q_1 q_2 \cdots q_k$ and find their occurrences as described in Lemma~\ref{lem:split}.
The algorithm also checks whether $\pre{q_1 \cdots q_j}$ occurs in $t$ or not in each iteration as described the above.

Examples of parameterized pattern matching by using an augmented position heap are given in Fig.~\ref{fig:apph_match}.
Let $t = x{\tt a}xyxyxyy{\tt a}xyxy$ be the text.
In Fig.~\ref{fig:apph_match} (a) we want to find the occurrence positions of a pattern $p_1 = xyxy$ in $t$.
In this case, since $\pre{p_1} = 0022$ is represented in $\pph{t}$,
The algorithm outputs all positions $i$ such that $\pmrp{i}$ is the node $0022$ or its descendants, those are $3$, $4$, $5$, and $11$.
On the other hand, Fig.~\ref{fig:apph_match} (b) shows how to find the occurrence positions of a pattern $p_2 = {\tt a}xyx$ in $t$.
In this case, $\pre{p_2} = {\tt a}002$ is not represented in $\pph{t}$.
Therefore, The algorithm finds the longest prefix of $\pre{p_2}$ that is represented in $\pph{i}$, which is $\pre{p_2}[1:2] ={\tt a}0$.
We can see that $prmp(2) = pmrp(10) = {\tt a}0$, then we save positions $2$ and $10$ as candidates to $\ans$.
Next, The algorithm finds the node that represents the longest prefix of $\pre{p_2[3:]} = 00$ which is $\pre{p_2[3:]} = 00$ itself.
Since both of $\pmrp{2+|p_2[1:2]|} = \pmrp{4}$ and $\pmrp{10 + |p_2[1:2]}| = \pmrp{12}$
is descendants of the node $00$, $\pre{t[2:5][3]} = \pre{t[10:13][3]} = \pre{p_2}[[3]] = 0$,
and $\pre{t[2:5][4]} = \pre{t[10:13][4]} = \pre{p_2}[4] = 2$,
then the algorithm outputs $2$ and $10$.

The time complexity of the matching algorithm is as follow.
\begin{theorem}
Algorithm~\ref{alg:PARA_MRP_MATCH} runs in $O(m\log{(|\Sigma|+|\Pi|)}+m|\Pi|+\occ)$ time.
\end{theorem}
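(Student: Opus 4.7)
The plan is to decompose the running time of Algorithm~\ref{alg:PARA_MRP_MATCH} into three contributions: the tree traversals that locate $\pre{\para{p}}$ (and, in the case $|w|<m$, each successive segment $q_j$) in $\apph{\para{t}}$; the verification of candidate positions via maximal-reach pointers and zero-position checks; and the reporting of the final matches. I would bound the traversal cost by $O(m\log(|\Sigma|+|\Pi|))$, the reporting by $O(\occ)$, and argue that the subtle verification part fits into $O(m|\Pi|+\occ)$.

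For the traversal cost, I would first note that $\pre{\para{p}}$ can be computed on the fly in $O(m\log|\Pi|)$ time, tracking the last occurrence of each parameter symbol with a balanced search tree. Locating the longest prefix $w$ of $\pre{\para{p}}$ and, subsequently, each segment $q_j$ amounts to walking down from the root of $\apph{\para{t}}$ one symbol at a time. The pattern index $i$ advances monotonically across every child lookup in both the inner and outer \textbf{while} loops, and each lookup through $\child{\cdot}{\cdot}$ costs $O(\log(|\Sigma|+|\Pi|))$, yielding the $m\log(|\Sigma|+|\Pi|)$ term. The descendant tests $\pmrp{v}\in\descendant{\apph{\para{t}}}{u}$ and $\pmrp{i'+j-1}\in\descendant{\apph{\para{t}}}{v}$ can each be answered in $O(1)$ after standard LCA-based preprocessing of $\apph{\para{t}}$. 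For the reporting step when $|w|=m$, Lemma~\ref{lem:occur} guarantees that every position stored at a descendant of $u$ is an occurrence, so the enumeration runs in $O(\occ)$ time.

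The main obstacle will be the case $|w|<m$: controlling the aggregate verification cost without letting the product $|\ans|\cdot|\zero_j|$ blow up. The crucial observation is that the candidate set $\ans$ after processing $q_1$ is drawn only from primary positions along the root-to-$u$ chain in $\apph{\para{t}}$, so $|\ans|\le|q_1|$ initially and the set is monotonically non-increasing thereafter. Combined with the fact that the prev-encoding of any segment contributes at most $|\Pi|$ zeros (only the first occurrence of each parameter symbol yields a $0$), we have $\sum_j|\zero_j|\le k|\Pi|\le m|\Pi|$. I would then charge each $O(1)$ zero-position test either to a removal from $\ans$ (amortized against the tokens initially placed in $\ans$) or to one of the $\occ$ surviving candidates that persists to the end, while the per-segment equality/descendant tests from Lemma~\ref{lem:split} are charged analogously. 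Summing the traversal, verification, and reporting contributions then gives the claimed $O(m\log(|\Sigma|+|\Pi|)+m|\Pi|+\occ)$ bound.
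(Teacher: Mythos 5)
Your decomposition into traversal, verification, and reporting matches the paper, and the traversal bound (total walk length at most $m$, each child lookup $O(\log(|\Sigma|+|\Pi|))$) and the $O(\occ)$ reporting in the case $|w|=m$ are fine. The gap is in the verification analysis for the case $|w|<m$. Knowing only that $|\ans|\leq |q_1|$ initially and that $\ans$ is monotonically non-increasing does not bound the work: a candidate may survive many iterations before being removed, and in every iteration it survives it incurs up to $O(1+|\zero_j|)=O(1+|\Pi|)$ tests. Your charging scheme (charge each test either to a removal or to a final survivor) does not cover the tests performed on a candidate in the iterations where it is \emph{not} removed but is removed later; and the charge to a final survivor would be $O(k|\Pi|)$, not $O(1)$ or $O(|\Pi|)$, so neither bucket yields $O(m|\Pi|+\occ)$ without further argument. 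With only your invariants the total could a priori be as large as $|q_1|\cdot k\cdot|\Pi|=\Theta(m^2|\Pi|)$, e.g.\ when $|q_1|=\Theta(m)$ and the remaining segments have constant length.

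The missing idea, which is the heart of the paper's proof, is a per-iteration bound on $|\ans|$ rather than a global amortization: by Lemma~\ref{lem:split}, a candidate that survives the verification of a non-final segment $q_{j}$ must satisfy $\pmrp{i'+|q_1\cdots q_{j-1}|}=\node{\pre{q_{j}}}$ \emph{exactly} (not a proper descendant), and every position whose maximal-reach node equals a fixed node $v$ is stored at a node on the root-to-$v$ path (its own node's path label is a prefix of the corresponding prev-encoded suffix), so there are only $O(\depth{v})=O(|q_{j}|)$ such positions. Hence the number of candidates examined in the $j$-th verification round is $O(|q_{j-1}|)$ (the paper states it as at most $|q_j|$), so the total number of candidate verifications over all rounds is $O(\sum_j |q_j|)=O(m)$, each costing $O(|\Pi|)$ for the zero-position checks, giving the $O(m|\Pi|)$ term directly. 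Your observation $\sum_j|\zero_j|\leq k|\Pi|$ is true but not the binding quantity; it is the shrinkage of $\ans$ to the length of the previous segment that makes the bound go through. Adding this structural claim (and its proof via Lemma~\ref{lem:split} and the placement of positions on the root-to-$v$ path) would repair your argument.
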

\begin{proof}
It is easily seen that we can compute line~\ref{ALG_PMM:VISIT} to \ref{ALG_PMM:TRAV1}
in $O(m\log{(|\Sigma|+|\Pi|)}+\occ)$ time.
Assume that
$\para{p}$ can be decomposed into $\para{q}_{1}$, $\para{q}_{2}$, $\cdots$, $\para{q}_{k}$
such that $\para{q}_{1}$ is the longest prefix of $\para{p}$ and $\para{q}_{i}$ is the longest prefix of $\pre{\para{p}[|q_1\cdots q_{j-1}|+1:]}$ represented in $\apph{t}$.
The loop for line~\ref{ALG_PMM:LOOP1} consists of $k-1$ iterations.
In the loop line~\ref{ALG_PMM:LOOP2} in $j$-th iteration,
$\para{q}_{j+1}$ is extended up to reach $|\para{q}_{j+1}|$ length.
This can be computed in $O(|\para{q}_{j+1}|\log{(|\Sigma|+|\Pi|)})$ time.
After $k-1$ iterations, the total number of extending of $\para{q}_{j+1}$
does not exceed $m$, because $\Sigma_{j=2}^k |\para{q_j}| < m$.
In the loop for line~\ref{ALG_PMM:LOOP3},
the algorithm verifies elements of $ans$.
In $j$-th iteration, the size of $ans$ is at most $|\para{q}_{j}|$.
Thus, after $k-1$ iterations, the total number of elements
verified in line~\ref{ALG_PMM:LOOP3} does not exceed $m$
by the same reason for that of line~\ref{ALG_PMM:LOOP2}.
In each verification in line~\ref{ALG_PMM:LOOP3},
the number of checks for line~\ref{ALG_PMM:VERIFY_OCC1} and~\ref{ALG_PMM:VERIFY_OCC2} is at most $|q_j|$.
Therefore,
it can be computed from line~\ref{ALG_PMM:LOOP3} to \ref{ALG_PMM:LOOP3_END}
in $O(m|\Pi|)$ time.
\end{proof}

\section{Conclusion and Future Work}
For the parameterized pattern matching problem,
we proposed an indexing structure called a parameterized position heap.
Given a p-string $\para{t}$ of length $n$ over a constant size alphabet,
the parameterized position heap for $\para{t}$
can be constructed in $O(n\log{(|\Sigma|+|\Pi|)})$ time
by our construction algorithm.
We also proposed an algorithm for the parameterized pattern matching problem.
It can be computed in $O(m\log{(|\Sigma|+|\Pi|)} + m|\Pi| +occ)$ time
using parameterized position heaps
with parameterized maximal-reach pointers.
Gagie \etal{}~\cite{Gagie2013} showed an interesting relationship between position heap and suffix array of a string.
We will examine this relation for parameterized position heap and parameterized suffix array~\cite{PSC2008-8,I2009} as a future work.


\bibliographystyle{abbrv} 
\bibliography{docs/ref}

\end{document}